\DeclareRobustCommand{\rvdots}{%
  \vbox{
    \baselineskip4\p@\lineskiplimit\z@
    \kern-\p@
    \hbox{.}\hbox{.}\hbox{.}
  }}
\def\etal.{et\penalty50\ al.}
\theoremstyle{plain}
\newtheorem{theorem}{Theorem}[section]
\newtheorem{lemma}[theorem]{Lemma}
\newtheorem{proposition}[theorem]{Proposition}
\newtheorem{observation}[theorem]{Observation}
\theoremstyle{definition}
\newtheorem{definition}{Definition}[section]
\theoremstyle{remark}
\theoremstyle{plain}
\newtheorem*{theorem*}{Theorem}
\DeclareMathOperator*{\E}{\mathbb{E}}
\title{Any-Order Online Interval Selection}
\author{Allan Borodin \\ \textsf{bor@cs.toronto.edu} \and Christodoulos Karavasilis \\ \textsf{ckar@cs.toronto.edu}}
\date{}
\begin{document}

\maketitle

\begin{abstract}
We consider the problem of online interval scheduling on a single machine, where intervals arrive online in an order chosen by an adversary, and the algorithm must output a set of non-conflicting intervals. Traditionally in scheduling theory, it is assumed that intervals arrive in order of increasing start times. We drop that assumption and allow for intervals to arrive in any possible order. We call this variant \textit{any-order interval selection} (AOIS). We assume that some online acceptances can be revoked, but a feasible solution must always be maintained. For unweighted intervals and deterministic algorithms, this problem is unbounded. Under the assumption that there are at most $k$ different interval lengths, we give a simple algorithm that achieves a competitive ratio of $2k$ and show that it is optimal amongst deterministic algorithms, and a restricted class of randomized algorithms we call \textit{memoryless}, contributing to an open question by Adler and Azar \cite{adler2003beating}; namely whether a randomized algorithm without memory or with only  ``bounded''  access to history can achieve a constant competitive ratio. We connect our model to the problem of \textit{call control} on the line, and show how the algorithms of Garay et al. \cite{garay1997efficient} can be applied to our setting, resulting in an optimal algorithm for the case of proportional weights. We also discuss the case of intervals with arbitrary weights, and show how to convert the single-length algorithm of Fung et al. \cite{fung2014improved} into a \textit{classify and randomly select} algorithm that achieves a competitive ratio of $2k$. Finally, we consider the case of intervals arriving in a \textit{random order}, and show that for single-lengthed instances, a \textit{one-directional} algorithm (i.e. replacing intervals in one direction), is the only deterministic memoryless algorithm that can possibly benefit from random arrivals.
\end{abstract}

\section{Introduction}
We consider the problem of scheduling intervals online with revoking\footnote{Displacing one or more previously scheduled intervals with a conflicting new interval.}.
Intervals arrive with a fixed start time and fixed end time, and have to be taken right away, or be discarded upon arrival, while no intervals in the solution conflict. The algorithm has to decide which intervals to include in the final schedule, so as to optimize some objective.

In the unweighted case, the goal is to maximize the number of intervals in the final solution. In the weighted case, we want an interval-set of maximum weight. 
Following previous work, we allow some revoking of online decisions, which is often considered even in the conventional start-time-ordered scheduling model. More precisely, if a newly arrived interval conflicts with other intervals already taken by the algorithm, we are able to take the new interval and discard the conflicting intervals. We are able to displace multiple existing intervals at once, although this won't occur in the unweighted case. To avoid confusion, we should note that \textit{preemption}\footnote{In contrast to revoking, preemption in much of the scheduling literature means the pausing of a scheduled job, and resuming it later.} is used in the interval selection literature to mean precisely this revoking of previous decisions we just described. Under this definition, preemption is allowed in our model. When we discard an interval it is final and it cannot be taken again.\\\\
We focus mainly on the unweighted case, where all intervals have the the same weight. We discuss the competitive ratio of the problem in terms of \textit{k}, the number of distinct interval lengths. However our algorithm does not need a priori knowledge of \textit{k}. We show that a simple, deterministic, ``memoryless'' algorithm that only replaces when the new interval is entirely subsumed by an existing one, achieves the optimal competitive ratio in terms of the parameter \textit{k}. We also show that ``memoryless'' randomized algorithms can not do any better.
The main difference between our model and most of the interval selection literature, is allowing intervals to arrive in any order, a strict generalization of the ordered case. Bachmann et al. \cite{bachmann2013online} have studied the any-order input model in the context of ``\textit{t}-intervals'' (we are concerned with $t=1$). They consider randomized algorithms, and don't allow revoking. In that model, they get a lower bound of $\Omega(N)$, with $N$ 
being the number of intervals in a given input instance.
The next most closely related problem is that of call admission \cite{garay1992call} on the line graph, with online intervals corresponding to paths of a given line graph. The connection between call control on the line graph and interval selection has been noted before, but has not been carefully defined. We wish to clarify this connection by explaining the similarities as well as the differences, and how results correspond. We note that the parameter $k \leq N$ (respectively, $k \leq n-1$)   is an obvious   refinement of the number of intervals (respectively, the number  of vertices for call admission on a line graph with $n$ vertices).\\\\
The applications of interval selection problems are plentiful. Some examples are resource allocation, network routing, transportation, and computer wiring. We refer the reader to the surveys by Kolen et al. \cite{kolen2007interval}, and Kovalyov et al. \cite{kovalyov2007fixed} for an overview of results and applications in the area of interval scheduling.\\\\
\textbf{Related Work.} Lipton and Tomkins \cite{lipton1994online} introduced the online interval scheduling problem. In our terminology, they consider the arrival of intervals with increasing start times (ordered), and interval weights that are proportional to the lengths. They don't allow displacement of existing intervals, and give a randomized algorithm with competitive ratio $O((log\Delta)^{1+\epsilon})$, where $\Delta$ is the ratio of the longest to shortest interval.\\\\
In the unweighted case with increasing starting times, Faigle and Nawijn \cite{faigle1995note} give an optimal 1-competitive algorithm that is allowed to revoke previous decisions (replace intervals). In the weighted case with increasing starting times,  Woeginger \cite{woeginger1994line} shows that for general weights, no deterministic algorithm can achieve a constant competitive ratio. Canetti and Irani \cite{canetti1995bounding} extend this and show that even randomized algorithms with revocable decisions cannot achieve a constant ratio for the general weighted case.
For special classes of weight functions based on the length (including proportional weights), Woeginger \cite{woeginger1994line} gives an optimal deterministic algorithm with competitive ratio 4. Seiden \cite{seiden1998randomized} gives a randomized $(2+\sqrt{3})$-competitive algorithm when the weight of an interval is given by a continuous convex function of the length. Epstein and Levin \cite{epstein2008improved} give a $~2.45$-competitive randomized algorithm for weights given by functions of the length that are monotonically decreasing, and they also give an improved $1+\ln(2) \approx 1.693$ upper bound for the weight functions studied by Woeginger \cite{woeginger1994line}. Fung et al. \cite{fung2014improved} currently have the final word on the best upper bounds, giving \textit{barely random} algorithms that achieve a competitive ratio of 2 for all the Woeginger weight functions. These algorithms randomly choose one of two deterministic algorithms at the beginning. More generally, barely random algorithms have access to a small number of deterministic algorithms, and randomly choose one.\\\\
Restricting interval lengths has previously been considered in the literature, e.g. Lipton and Tomkins \cite{lipton1994online} study the case of two possible lengths, and Bachmann et al. \cite{bachmann2013online} consider single and two-length instances. For the related offline problem of throughput maximization, Hyatt-Denesik et al. \cite{hyatt2020approximations} consider $c$ distinct processing times. The special case of single-length jobs has been studied in the job scheduling \cite{sgall1998line,baptiste2000scheduling,chrobak2006note}, sum coloring \cite{borodin2012sum}, and the interval selection literature \cite{fung2012line,miyazawa2004improved}. Woeginger \cite{woeginger1994line} also points out how his results can be extended to the case of equal lengths and arbitrary weights. Miyazawa and Erlebach \cite{miyazawa2004improved} point out the equivalency between fixed length (w.l.o.g. unit) instances, and 
proper interval instances, i.e. instances where no interval is contained within another. This is because of a result by Bogart and West \cite{bogart1999short}, showing the equivalency of the corresponding interval graphs in the offline setting.\\\\
There has also been some work on multiple identical machines. For the case of equal-length, arbitrary-weight intervals, Fung et al. \cite{fung2012line} give an algorithm that is 2-competitive when $m$, the number of machines, is even, and $(2+\frac{2}{2m-1})$ when $m$ is odd. Yu \& Jacobson \cite{yu2018online} consider C-benevolent (weight function is convex increasing) jobs and get an algorithm that is 2-competitive when $m$ is even, and $(2+\frac{2}{m})$-competitive when $m$ is odd.\\\\
In the problem of call control, a graph is given, and requests that correspond to pairs of nodes of the graph arrive online. The goal is to accept as many requests as possible, with the final set consisting of disjoint paths. When the underlying graph is a line, this problem is closely related to ours. For call control on the line, Garay et al. \cite{garay1997efficient} give optimal deterministic algorithms. In the unweighted case, they achieve a $O(\log(n))$ competitive ratio, where $n$ is the number of the vertices of the graph. In the case of proportional weights (weight is equal to the length of the path), they give an optimal algorithm that is $(\sqrt{5}+2)\approx 4.23$-competitive (its optimality was shown by Furst and Tomkins \cite{tomkins1995lower}). Adler and Azar \cite{adler2003beating} use randomization to overcome the $\log(n)$ lower bound, and give a 16-competitive algorithm. Emek et al. \cite{emek2016space} study interval selection in the streaming model, and show how to modify their streaming algorithm to work online, achieving a competitive ratio of 6, improving upon the 16-competitive algorithm of Adler and Azar. It is noteworthy that the Adler and Azar algorithm uses memory proportional to the entire input sequence.
In contrast, the Emek et al. algorithm only uses memory that is within a constant factor of a current OPT solution.  It is still an open question if a randomized algorithm using only constant bounded memory can get a constant ratio in the unweighted case. We show that  for a strict, but natural definition of memoryless randomized algorithms, a constant ratio cannot be obtained. The algorithms presented in this paper, along with the optimal algorithms by Garay et al. \cite{garay1997efficient} and Woeginger \cite{woeginger1994line}, fall under our definition of memoryless. It is worth noting that similar notions of memoryless algorithms, and comparison between randomized memoryless and deterministic, have appeared in the k-server and caching literature \cite{coester2019online,koutsoupias2009k,Kleinberg94,raghavan1989memory}. We would note that barely random algorithms as described earlier (i.e. algorithms that initially generate some random bits, which are used in every online step), are not memoryless but usually satisfy bounded memory.  The algorithms by Fung et al. \cite{fung2014improved} are an example of this. More generally, this use of initial random bits are the \textit{classify and randomly select} algorithms\footnote{Barely random algorithms can be thought of as a special case of the classify and randomly select paradigm.} (e.g. Lipton and Tomkins \cite{lipton1994online} and Awerbuch et al. \cite{awerbuch1994competitive}). It's important to note that such algorithms may require prior knowledge of bounds on lengths of intervals. In appendix \ref{app:A} we discuss our meaning of memoryless and bounded memory online algorithms, and the relation to randomness, advice,  and the Adler and Azar question. \\\\
The problem of admission control has also been studied under the model of minimizing rejections \cite{blum2001admission,alon2005admission} instead of maximizing acceptances. An alternative input model for interval selection is that of arriving conflicts \cite{halldorsson2013online} instead of single intervals, with the algorithm being able to choose at most one item from each conflict. We also note that, an instance of interval selection can be represented as an interval graph, with intervals corresponding to vertices, and edges denoting a conflict between two intervals. Generally, interval graphs reveal much less about the instance compared to receiving the actual intervals. In the interval graph representation, arriving vertices may have an adjacency list only in relation to already arrived vertices, or they may show adjacency to future vertices as well.\\\\
\textbf{Our results.} For the unweighted adversarial case, we know that no deterministic algorithm is bounded (follows from \cite{garay1997efficient}). Assuming there are at most $k$ different lengths, we show how a simple greedy algorithm achieves a competitive ratio of $2k$. We also give a matching lower bound that holds for all deterministic algorithms, as well as ``memoryless'' randomized algorithms. We note that an instance with $k$ different lengths can have a nesting depth of at most $k-1$. Alternatively, we can state our results in terms of $d$, the nesting depth (see figure \ref{fig:conflicts}), noting that $d \leq (k-1)$. This implies that our $2k$ bounds can be restated as $2(d+1)$. We also show how to extend  the classify and randomly select paradigm used by Fung et al \cite{fung2014improved} to obtain a randomized algorithm that is $2k$-competitive for the case of arbitrary weights and $k$ different interval lengths. It's worth noting that Canetti and Irani \cite{canetti1995bounding} give a $\Omega (\sqrt{k})$ lower bound for randomized algorithms and arbitrary weights.\\\\
We show how the problem of call control on the line \cite{garay1997efficient} relates to interval selection, and in particular how their $\log n$-competitive algorithm for the unweighted case and their  $(2+\sqrt{5})$-competitive algorithm for proportional weights carries over to interval selection. Lastly, we consider deterministic memoryless algorithms for the problem of any-order, unweighted, single-lengthed (i.e. unit) intervals with random order arrivals. We show that the only deterministic memoryless algorithm that can possibly benefit from random arrivals is one-directional, only replacing intervals if they overlap in that particular direction. \\\\
\textit{Organization of the paper.}
Section 2 has some definitions to clarify the model. Section 3 has our upper and lower bounds in the adversarial case, the connection to call control, and the application of the proportional weights algorithm to our model. Section 4 discusses arbitrary weights. Section 5 is about interval selection in the  random order model. We end with some conclusions and open problems.

\section{Preliminaries}

Our model consists of intervals arriving on the real line. An interval $I_{i}$ is specified by a starting point $s_{i}$, and an end point $f_{i}$, with $s_{i} < f_{i}$. It occupies space $[s_{i},f_{i})$ on the line, and the conventional notions of intersection, disjointness, and containment apply. This allows two adjacent intervals $[s_{1},f)$ and $[f,f_{2})$ to not conflict, although our results would apply even if we considered closed intervals $[s_{i},f_{i}]$ with $[s_{1},f]$ and $[f,f_{2}]$ conflicting. There are two main ways two intervals can conflict, and they are shown in figure \ref{fig:conflicts}.\\\\
We use the notion of competitive ratio to measure the performance of our online algorithms. Given an algorithm $A$, let ALG denote the objective value of the solution achieved by the algorithm, and let OPT denote the optimal value achieved by an offline algorithm. The competitive ratio of $A$ is defined as follows: $CR(A)=\frac{OPT}{ALG} \geq 1$. We should note that we can repeat disjoint copies of our nemesis sequences, and get the corresponding tight lower bounds. As a result, we can omit the standard additive term in our definition of competitive ratio. We will sometimes abuse notation and use ALG and OPT to denote the sets of intervals maintained by the algorithm at some given point, and the set of intervals of an optimal solution respectively. In the case of deterministic algorithms and random arrival of intervals, the performance of an algorithm is a random variable, and the competitive ratios hold w.h.p. (definition of competitive ratio remains unchanged). The algorithm we present in the case of arbitrary weights is randomized, and its expected competitive ratio is defined as $CR(A)=\frac{OPT}{\E[ALG]}$.
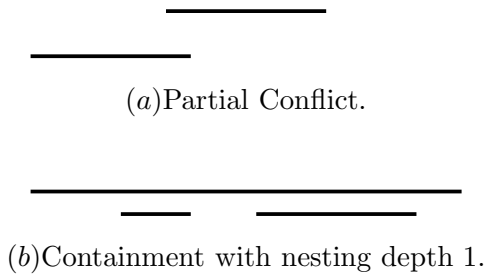
\begin{figure}[H]
	\centering
	
	\begin{tikzpicture}[scale=0.6]

	\node[draw=none] (I1a) at (-10,0) {$ $};
	\node[draw=none] (I1b) at (-6,0) {$ $};
	\draw[line width=0.5mm] (I1a) -- (I1b);


	\node[draw=none] (I2a) at (-7,1) {$ $};
	\node[draw=none] (I2b) at (-3,1) {$ $};
	\draw[line width=0.5mm] (I2a) -- (I2b);
	
	\node at (-5,-1) {$(a) \text{Partial Conflict.}$};
	
	\node[draw=none] (I3a) at (-8,-3.5) {$ $};
	\node[draw=none] (I3b) at (-6,-3.5) {$ $};
	\draw[line width=0.5mm] (I3a) -- (I3b);

 \node[draw=none] (Ima) at (-5,-3.5) {$ $};
	\node[draw=none] (Imb) at (-1,-3.5) {$ $};
	\draw[line width=0.5mm] (Ima) -- (Imb);


	\node[draw=none] (I4a) at (-10,-3) {$ $};
	\node[draw=none] (I4b) at (-0,-3) {$ $};
	\draw[line width=0.5mm] (I4a) -- (I4b);
	
	\node at (-5,-4.5) {$(b) \text{Containment with nesting depth 1}.$};

	\end{tikzpicture} 
	\caption{Types of conflicts.}\label{fig:conflicts}
\end{figure}
We sometimes refer to a \textit{chain} of intervals (figure \ref{fig:chain-ex}). This is a set of intervals where each interval partially conflicts with exactly two other intervals, except for the two end intervals that partially conflict with only one.
\begin{figure}[H]
	\centering
	
	\begin{tikzpicture}[scale=0.6]

	\node[draw=none] (I1a) at (-18,0) {$ $};
	\node[draw=none] (I1b) at (-14,0) {$ $};
	\draw[line width=0.5mm] (I1a) -- (I1b);


	\node[draw=none] (I2a) at (-15,1) {$ $};
	\node[draw=none] (I2b) at (-11,1) {$ $};
	\draw[line width=0.5mm] (I2a) -- (I2b);

    \node[draw=none] (I3a) at (-12,0) {$ $};
	\node[draw=none] (I3b) at (-8,0) {$ $};
	\draw[line width=0.5mm] (I3a) -- (I3b);

 \node[draw=none] (I4a) at (-9,1) {$ $};
	\node[draw=none] (I4b) at (-5,1) {$ $};
	\draw[line width=0.5mm] (I4a) -- (I4b);

 \node[draw=none] (I5a) at (-6,0) {$ $};
	\node[draw=none] (I5b) at (-2,0) {$ $};
	\draw[line width=0.5mm] (I5a) -- (I5b);

	\end{tikzpicture} 
	\caption{Interval chain.}\label{fig:chain-ex}
\end{figure}
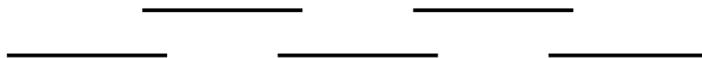

\section{Adversarial Order}
\subsection{Unweighted}
In this section, we assume an adversary chooses the instance configuration, along with the arrival order of all intervals. Lemma \ref{lem:need-rev} shows that revocable decisions are necessary even in the case of two different lengths. Algorithm \ref{alg:SUB} is the greedy algorithm that achieves the optimal competitive ratio of $2k$ in the unweighted case, and it works as follows: On the arrival of a new interval, take it if there's no conflict. If there's a conflict, take the new interval only if it is properly contained inside an existing interval.

\begin{algorithm}
\caption{}\label{alg:SUB}
\begin{algorithmic}

\State On the arrival of $I$:
\State $I_{s} \gets $ Set of intervals currently in the solution conflicting with $I$

\For{$I' \in I_{s}$}
\If{$I \subset I'$}
    \State Take $I$ and discard $I'$
    \State return
\EndIf
\EndFor
\State Discard $I$

\end{algorithmic}
\end{algorithm}

\begin{lemma}
\label{lem:need-rev}
The problem of any-order unweighted interval scheduling with two different lengths and irrevocable decisions is unbounded.
\end{lemma}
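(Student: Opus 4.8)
The plan is to exhibit, for every bound $c$, an adversarial instance using only two distinct interval lengths on which no deterministic online algorithm with irrevocable decisions can attain competitive ratio $c$. The core mechanism exploits the fact that once a long interval is irrevocably accepted, every short interval nested inside it is permanently blocked, whereas the optimal offline solution can forgo the long interval in favour of many disjoint short intervals. Since the algorithm is deterministic, the adversary may simulate it in advance and branch its construction on the algorithm's response to the first interval.

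First I would release a single long interval $L = [0,n)$ and split into two cases according to the algorithm's (predetermined) decision. If the algorithm rejects $L$, the adversary simply halts the sequence: then $\mathrm{OPT}=1$ while $\mathrm{ALG}=0$, so no finite ratio is achievable. If instead the algorithm accepts $L$, the adversary then releases the $n$ pairwise non-conflicting unit intervals $[0,1),[1,2),\dots,[n-1,n)$, each of which is properly contained in $L$ and hence conflicts with it. Because decisions are irrevocable, the committed $L$ cannot be withdrawn, so to maintain feasibility every one of these short intervals must be discarded; the algorithm therefore ends with $\mathrm{ALG}=1$.

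To finish I would bound $\mathrm{OPT}$ from below: the $n$ unit intervals are mutually disjoint and form a feasible schedule, so $\mathrm{OPT}\ge n$, giving a ratio of at least $n$. Only the two lengths $n$ and $1$ appear, so the instance respects the hypothesis, and letting $n\to\infty$ drives the ratio above any prescribed $c$, establishing unboundedness.

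The step I expect to require the most care is not the arithmetic but the framing of the adversary: I must argue that branching on the algorithm's first decision is legitimate, which it is precisely because the algorithm is deterministic, so its action on $L$ is fixed and computable by the adversary before the instance is fully revealed. I would also add two remarks to preempt objections: the degenerate branch $\mathrm{ALG}=0$ is genuinely unbounded (every finite bound fails), and the irrevocability hypothesis is essential, since with revoking an algorithm could replace $L$ by a nested short interval and recover, which is exactly the behaviour that Algorithm~\ref{alg:SUB} relies on. This contrast is what motivates the lemma.
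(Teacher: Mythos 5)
Your proposal is correct and follows essentially the same construction as the paper: a single long interval arrives first, and if accepted it is followed by many disjoint unit intervals nested inside it, forcing $\mathrm{ALG}=1$ against $\mathrm{OPT}\ge n$. The only difference is presentational: where the paper writes ``w.l.o.g.\ the algorithm takes it,'' you spell out the rejection branch explicitly (halting the sequence so $\mathrm{ALG}=0$, $\mathrm{OPT}=1$), which is a reasonable tightening of the same argument rather than a different route.
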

\begin{proof}
Consider two possible interval lengths of 1 and $K$. Let an interval of length $K$ arrive first. W.l.o.g. the algorithm takes it. Then $K$ 1-length intervals arrive next, all of them overlapping with first $K$-length interval. The algorithm cannot take any of the 1-length intervals, achieving a competitive ratio of $\frac{1}{K}$.
\end{proof}
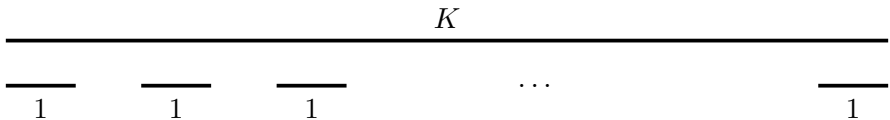
\begin{figure}[H]
	\centering
	\begin{tikzpicture}[scale=0.6]
	
	\node at (0,0.5) {$K$};
	\node[draw=none] (I1a) at (-10,0) {$ $};
	\node[draw=none] (I1b) at (10,0) {$ $};
	\draw[line width=0.5mm] (I1a) -- (I1b);
	
    \node at (-9,-1.5) {$1$};
	\node[draw=none] (I2a) at (-10,-1) {$ $};
	\node[draw=none] (I2b) at (-8,-1) {$ $};
	\draw[line width=0.5mm] (I2a) -- (I2b);
	
	\node at (-6,-1.5) {$1$};
	\node[draw=none] (I3a) at (-7,-1) {$ $};
	\node[draw=none] (I3b) at (-5,-1) {$ $};
	\draw[line width=0.5mm] (I3a) -- (I3b);
	
	\node at (-3,-1.5) {$1$};
	\node[draw=none] (I4a) at (-4,-1) {$ $};
	\node[draw=none] (I4b) at (-2,-1) {$ $};
	\draw[line width=0.5mm] (I4a) -- (I4b);
	
	\node[draw=none] (d1) at (2,-1) {$\dots$};
	
	\node at (9,-1.5) {$1$};
	\node[draw=none] (I5a) at (8,-1) {$ $};
	\node[draw=none] (I5b) at (10,-1) {$ $};
	\draw[line width=0.5mm] (I5a) -- (I5b);

	\end{tikzpicture} 
	\caption{Unweighted instance with two different lengths.
}\label{fig:need-revoke}
\end{figure}

\begin{theorem}\label{theo:pos-2k}
Algorithm 1 achieves a competitive ratio of $2k$ for the problem of any-order unweighted interval scheduling with $k$ different lengths.
\end{theorem}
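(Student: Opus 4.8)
The plan is a charging argument: assign every interval of OPT to an interval of the final solution ALG so that each ALG interval receives at most $2k$ charges, which yields $|OPT| \le 2k\,|ALG|$. First I would record the structural features of Algorithm 1 (using the intended reading that a conflict-free arrival is accepted). The maintained solution is always feasible; an interval enters either as a fresh, conflict-free acceptance or by replacing an interval that strictly contains it, so a replacement only shrinks the occupied region. Consequently each final interval $A \in ALG$ is the endpoint of a chain $A = A_0 \subsetneq A_1 \subsetneq \cdots \subsetneq A_m$, where $A_{i-1}$ replaced $A_i$ and $A_m$ was a fresh acceptance; I will call the $A_i$ the \emph{ancestors} of $A$. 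Since proper containment forces strictly increasing lengths and there are only $k$ lengths, $m+1 \le k$. Dually, every interval $B$ ever present in the solution lies on exactly one such chain, hence is an ancestor of a unique final interval, namely the survivor of its chain.

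For the charging itself, each $O \in OPT$ is processed exactly once. If $O$ is accepted and survives, it is an ancestor (itself) of a final interval; if it is accepted and later replaced, it is a proper ancestor of its chain's survivor; in both cases I charge $O$ to that survivor as a \emph{self} charge, occupying one ancestor slot. If instead $O$ is rejected, it conflicts with some interval $B$ currently in the solution with $O \not\subset B$; I pick such a $B$ as the blocker of $O$ and charge $O$ to the final interval whose chain contains $B$, as a \emph{blocking} charge to that ancestor slot. Every OPT interval thus lands on exactly one ancestor slot of exactly one ALG interval, so the total number of charges equals $|OPT|$.

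The two quantitative facts are: (i) a fixed interval $B$ is the blocker of at most two OPT intervals, since all intervals it blocks overlap $B$ yet are not contained in $B$, so each extends beyond $B$ on the left or on the right; being pairwise disjoint, at most one can extend past each endpoint, giving at most two; and (ii) if an ancestor $A_i$ is itself an OPT interval, then it blocks no OPT interval, because any blocked interval overlaps $A_i$ while belonging to the pairwise-disjoint set OPT. Combining these, each ancestor slot $A_i$ contributes either exactly one self charge (when $A_i \in OPT$, and then, by (ii), zero blocking charges) or at most two blocking charges (when $A_i \notin OPT$, and then no self charge) --- at most two in either case. Summing over the $\le k$ slots of each $A$ gives at most $2k$ charges per ALG interval, and therefore $|OPT| \le 2k\,|ALG|$.

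I expect the main obstacle to be exactly the bookkeeping that produces $2k$ rather than a lossier $3k$: one must route each rejected OPT interval to an ancestor of a genuine final interval (using that replacement chains are simple paths terminating at unique survivors), and then invoke the disjointness observation (ii) to forbid an ancestor from simultaneously incurring a self charge and blocking charges. A secondary point to treat carefully is the half-open/adjacency convention when establishing the ``at most two'' bound, together with the verification that an arriving OPT interval strictly contained in a current interval is accepted (so that such an interval is never counted as blocked).
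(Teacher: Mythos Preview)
Your argument is correct and is essentially the same charging scheme as the paper's: both route every OPT interval to one member of a replacement chain of length at most $k$, show that each chain member absorbs at most two OPT intervals, and conclude $|OPT|\le 2k\,|ALG|$. The only cosmetic differences are that the paper phrases the same bookkeeping as ``direct charging'' plus ``transfer charging'' (charges move down the chain upon replacement) rather than assigning directly to ancestor slots, and it disposes of the case where an OPT interval contains a current ALG interval by modifying OPT, whereas your endpoint argument already covers it; your explicit observation (ii) that an ancestor in OPT cannot also be a blocker is exactly what the paper uses implicitly in its Case~1.
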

\begin{proof}
We define a mapping of intervals $f: OPT \longrightarrow ALG$, where every interval in ALG has at most $2k$ intervals in OPT mapped to it. Because intervals taken by the algorithm might be replaced during the execution, the mapping $f$ might be redefined multiple times. What follows is the way optimal intervals $I \in OPT$ are charged, as soon as they arrive, to intervals $I' \in ALG$. There are four cases of interest:\\\\
\textit{Case 1}: The newly arrived optimal interval is taken by the algorithm.\\
This can happen either because this interval did not conflict with any other intervals taken by the algorithm, or because it was entirely subsumed by a larger interval in ALG, in which case the algorithm would have replaced the large interval with the new small one. In this case, this optimal interval is mapped onto itself.\\\\
\textit{Case 2}: The newly arrived optimal interval partially conflicts with one interval currently in ALG.
In this case, this optimal interval is charged to the interval it conflicts with.\\\\
\textit{Case 3}: The newly arrived optimal interval partially conflicts with two intervals currently in ALG.
In this case, this optimal interval can be charged to any of these two intervals arbitrarily. We may assume it's always charged to the interval it conflicts with on the right. Notice also, that a newly arrived interval, cannot partially conflict with more than two intervals in ALG.\\\\
\textit{Case 4}: The newly arrived optimal interval subsumes an interval currently in ALG. W.l.o.g. we can assume this never happens. Any such optimal solution $OPT$ can be turned into an optimal solution $OPT'$, with the smaller interval in place of the larger one. We can restrict ourselves to only look at optimal solutions where no such transformation can take place. This case also encapsulates the case of an optimal interval perfectly coinciding with an interval taken by the algorithm.\\\\
An interval ($I_{l}$) taken by the algorithm can later be replaced, if a smaller one ($I_{s}$) comes along and is subsumed by it. When this happens, all intervals in $OPT$ charged to $I_{l}$ up to that point, will be transferred and charged to $I_{s}$. As a result, there are two ways an interval taken by the algorithm can be charged intervals in $OPT$. The first way is when an interval $I \in OPT$ is directly charged to an interval $I' \in ALG$ when $I$ arrives (Cases 1-4). This will be referred to as \textit{direct charging}. The second way is when a new interval, $I_{n}$, arrives, and replaces an existing interval $I_{e}$, in which case all optimal intervals previously charged to $I_{e}$, will now be charged to $I_{n}$. This will be referred to as \textit{transfer charging}.\\\\

\begin{proposition}
\label{prop:direct}
An interval taken by the algorithm (even temporarily), can be charged by at most two optimal intervals through \textit{direct charging}.
\end{proposition}
To see why this true, we consider the three main cases of direct charging explained earlier. In \textit{Case} 1, the optimal interval is taken by the algorithm and is charged to itself. Because no other optimal interval can conflict with it, we know this interval will never be directly charged again.\\
In \textit{Cases} 2 and 3, direct charging happens because of the optimal interval partially conflicting with one or two intervals currently taken by the algorithm. Because an interval taken by the algorithm can partially conflict with at most two optimal intervals (one on each side), it can be charged twice at most.\\
\begin{proposition}
\label{prop:transfer}
An interval taken by the algorithm can be charged at most $2k-2$ optimal intervals through \textit{transfer charging.}
\end{proposition}
Consider a sequence of interval replacements by the algorithm, where all optimal intervals charged to an interval in the sequence, are passed down to the next interval in the sequence. The last interval in that sequence will have accumulated all the optimal intervals charged to the previous intervals in that sequence. Because we consider $k$ different lengths, such a sequence can have up to $k$ intervals, participating in $k-1$ transfer charging events. We also know that every interval in that sequence can be charged at most two optimal intervals through direct charging (\textit{Proposition \ref{prop:direct}}) before being replaced.
Consequently, assuming two additional charges are added to each interval in that sequence, the last (smallest) interval will be charged $2(k-1)$ optimal intervals through transfer charging.\\\\
We have described a process, during which every optimal interval is charged to an interval in ALG.
By Propositions \ref{prop:direct} \& \ref{prop:transfer}, we know that an interval in ALG, can be charged by $2k$ intervals in OPT at most. Therefore, our algorithm has a competitive ratio of $2k$ for the problem of unweighted interval selection with revocable decisions and $k$ different possible interval lengths. This ratio is tight for this algorithm and an example instance for $k=2$ is shown in Figure 4. $I_{1}$ is directly charged by $I_{2}$ and $I_{3}$, transfers charges to $I_{4}$, which in turn is directly charged another two times by $I_{5}$ and $I_{6}$.
\end{proof}

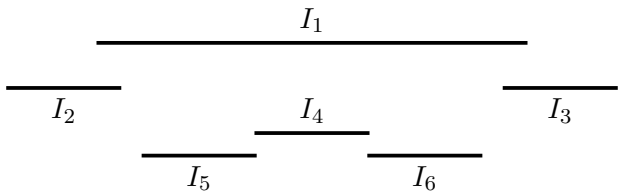
\begin{figure}[H]
	\centering
	
	\begin{tikzpicture}[scale=0.6]

	\node at (0,0.5) {$I_{1}$};
	\node[draw=none] (I1a) at (-5,0) {$ $};
	\node[draw=none] (I1b) at (5,0) {$ $};
	\draw[line width=0.5mm] (I1a) -- (I1b);
	
	\node at (-5.5,-1.5) {$I_{2}$};
	\node[draw=none] (I11a) at (-7,-1) {$ $};
	\node[draw=none] (I11b) at (-4,-1) {$ $};
	\draw[line width=0.5mm] (I11a) -- (I11b);
	
	\node at (5.5,-1.5) {$I_{3}$};
	\node[draw=none] (I21a) at (4,-1) {$ $};
	\node[draw=none] (I21b) at (7,-1) {$ $};
	\draw[line width=0.5mm] (I21a) -- (I21b);

	\node[draw=none] (I4a) at (-1.5,-2) {$ $};
	\node[draw=none] (I4b) at (1.5,-2) {$ $};
	\node at (0,-1.5) {$I_{4}$};
	\draw[line width=0.5mm] (I4a) -- (I4b);
	
	\node[draw=none] (I5a5) at (-4,-2.5) {$ $};
	\node[draw=none] (I5b5) at (-1,-2.5) {$ $};
	\node at (-2.5,-3) {$I_{5}$};
	\draw[line width=0.5mm] (I5a5) -- (I5b5);
	
	\node[draw=none] (I6a) at (1,-2.5) {$ $};
	\node[draw=none] (I6b) at (4,-2.5) {$ $};
	\node at (2.5,-3) {$I_{6}$};
	\draw[line width=0.5mm] (I6a) -- (I6b);
	
	\end{tikzpicture} 
	\caption{4-competitive tight example for Algorithm 1. Interval subscripts corresponds to the arrival order.}\label{fig:4-tight}
\end{figure}
We now provide a matching lower bound, showing that no deterministic algorithm can do better.
\begin{theorem}
\label{theo:neg-2k}
No deterministic algorithm can achieve a competitive ratio better than $2k$ for the problem of unweighted interval selection with revocable decisions and $k$ different lengths.
\end{theorem}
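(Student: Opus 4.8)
The plan is to establish a matching lower bound by describing, for an arbitrary deterministic algorithm $A$, an \emph{adaptive} adversary that drives $\mathrm{OPT}/\mathrm{ALG}$ to $2k$, so that together with Theorem~\ref{theo:pos-2k} the ratio $2k$ is exactly optimal. The target is the natural generalization of the tight instance of Figure~\ref{fig:4-tight}: a nested ``onion'' built from $k$ geometrically separated lengths $L_1 \gg L_2 \gg \cdots \gg L_k$ (say $L_i = c^{\,k-i}$ for a large constant $c$), in which the algorithm is driven down a containment chain $J_1 \supset J_2 \supset \cdots \supset J_k$ while, at each level, two intervals that partially conflict with $J_i$ are banked into the optimal solution. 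If this can be enforced, the algorithm is left holding only the innermost interval, so $\mathrm{ALG}=1$, while $\mathrm{OPT}$ consists of the $2k$ pairwise-disjoint flankers, yielding ratio $2k$. (If $A$ ever rejects the very first interval we stop immediately with $\mathrm{ALG}=0$, so we may assume $A$ accepts it.)

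First I would isolate a single-level ``pursuit'' gadget that forces a factor of $2$ and is robust to revocation. With the algorithm holding an interval $X$, the adversary presents one flanker $F$ partially conflicting with $X$ on the right. If $A$ keeps $X$, the adversary then presents the left flanker $F'$; now $F$ has already been irrevocably discarded, $F$ and $F'$ are mutually disjoint but each conflicts with $X$, and whether $A$ keeps $X$ or switches to $F'$ it is left holding a single interval while $\{F,F'\}$ places two disjoint intervals into $\mathrm{OPT}$. If instead $A$ revokes $X$ in favour of $F$, the adversary \emph{does not} offer the disjoint interval $F'$ (which $A$ could otherwise also grab); it repeats the pursuit with $F$ as the new held interval. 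The crucial point is that the two mutually disjoint flankers are never simultaneously available, so finality of rejection -- not geometry -- caps the algorithm at one interval, and a clean dichotomy results: either $A$ commits after finitely many revocations, closing the gadget with the factor $2$, or it revokes indefinitely, producing a staircase on which $A$ holds one interval while $\mathrm{OPT}$ grows without bound, giving a ratio already exceeding $2k$.

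To obtain the factor $k$ I would nest $k$ such gadgets, maintaining the invariant that after level $i$ the algorithm holds a single interval $J_i$ of length $L_i$ and $2i$ pairwise-disjoint flankers have been banked outside $J_i$. To descend, the adversary presents an interval of length $L_{i+1}$ contained in the central gap of $J_i$ left uncovered by its two level-$i$ flankers; if $A$ switches to it, the pursuit resumes one level deeper, while if $A$ stubbornly keeps the large $J_i$ the adversary simply reveals all remaining flankers, which all conflict with $J_i$ and are rejected, immediately forcing $\mathrm{OPT}=2k$ against $\mathrm{ALG}=1$. The geometric separation of the $L_i$ guarantees that level-$(i{+}1)$ flankers sit strictly inside $J_i$ and strictly between the level-$i$ flankers, so that across all $k$ levels the $2k$ banked intervals remain globally pairwise disjoint and disjoint from the final held interval, hence form a feasible solution of value $2k$.

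The hard part is making this adaptive bookkeeping fully rigorous in the presence of arbitrary revocation. The central risk is the algorithm ``escaping'' a level by revoking a large interval and then collecting two disjoint small intervals; the pursuit is designed precisely to deny this by never exposing two disjoint candidates at once, but one must verify that this invariant is preserved as the construction recurses, i.e.\ that a revocation deep in the onion can never retroactively free a flanker banked at a shallower level (which holds because banked flankers are chosen disjoint from the entire future working region). Threading the commit-versus-revoke dichotomy simultaneously through all $k$ nested levels, and confirming that every possible revocation can only increase the eventual ratio, is the delicate and error-prone step; the geometry, once the lengths are fixed geometrically, is routine by comparison.
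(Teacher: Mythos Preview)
Your overall architecture---$k$ nested adaptive gadgets, each banking two disjoint intervals while pinning the algorithm to one---is exactly the paper's construction, and the invariant you aim for (after level $i$, $A$ holds one interval $J_i$ with $2i$ pairwise-disjoint flankers banked outside $J_i$) is the right one. But your single-level pursuit gadget has a concrete hole in one branch.

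Take the sub-case where $A$ keeps $X$ when the right flanker $F$ arrives, and then \emph{revokes $X$ in favour of $F'$} when the left flanker $F'$ arrives. You declare the level closed with ``$\{F,F'\}$ places two disjoint intervals into $\mathrm{OPT}$,'' but now $A$ holds $J_i=F'$, and the next level must take place inside $F'$. Since every deeper interval then lies inside $F'$, the optimal solution cannot contain $F'$ together with any level-$(i{+}1)$ interval; the only level-$i$ interval genuinely banked \emph{outside} $J_i$ is $F$. Your invariant therefore delivers at most $1+2(k-1)=2k-1$, not $2k$. The paper avoids this by never terminating the chain while $A$ is still willing to switch: if $A$ takes $F'$, the chain simply keeps growing to the left until $A$ finally refuses, at which point the two chain-neighbours of whatever $A$ holds are banked---both overlap $A$'s interval (hence lie outside it), are disjoint from each other, and the entire future construction sits strictly between them.

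A related loose end is your ``stubborn'' branch. You write that if $A$ refuses the first level-$(i{+}1)$ interval, the adversary ``reveals all remaining flankers, which all conflict with $J_i$ and are rejected.'' But $A$ can revoke $J_i$ for any one of them, and once it holds a small interval, the very next disjoint flanker you reveal is acceptable with no conflict---so dumping the remaining intervals lets $A$ collect two. The paper handles this by running the \emph{full} level-$(i{+}1)$ chain construction regardless of what $A$ currently holds: if $A$ never jumps in, the chain's first and fourth intervals are disjoint and get banked anyway, with the recursion proceeding inside the second; if $A$ jumps in at any point by revoking $J_i$, the chain simply continues from there as in the ordinary case.
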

\begin{proof}
At any point during the execution, the algorithm will have exactly one interval in its solution, while the size of the optimal solution will keep growing. We begin by describing how the main component of the instance is constructed, using intervals of the same length. First, the adversary must decide on an overlap amount $v$, which can be arbitrary. All partially conflicting intervals will overlap by exactly this amount. Consider now the instance of figure \ref{fig:base-ADV}. Intervals $I_{1}$ and $I_{2}$ arrive first in that order. If $I_{1}$ is taken by the algorithm and is then replaced by $I_{2}$, then $I_{4}$ arrives. If $I_{1}$ was taken by the algorithm but was not replaced by $I_{2}$, then $I_{3}$ would arrive. Because this case is symmetrical, we only consider the former case of $I_{2}$ replacing $I_{1}$. What happens is that this chain keeps growing in the same direction, until the algorithm decides to stop replacing. When that happens, we look at the last three intervals of the chain. For example, when $I_{4}$ arrived, if the algorithm chose to not select $I_{4}$ and instead maintain $I_{2}$, we stop growing the chain and consider the intervals $(I_{1}, I_{2}, I_{4})$. If the algorithm never stops replacing, it will end up with $I_{5}$ in its solution. Although it's not necessary, if the algorithms seems to always be replacing as the chain is growing, the adversary is able to abuse this as much as they want. In all cases, there exists an optimal solution of at least two intervals, with neither of them being the one taken by the algorithm. Note also that this construction requires at most four intervals of length $L$, occupying space at most $(4L - 3v)$ in total.\\
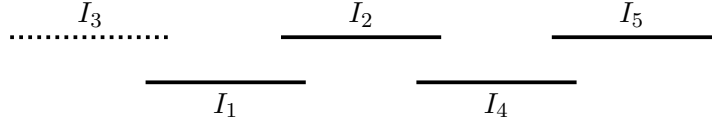
\begin{figure}
	\centering
	
	\begin{tikzpicture}[scale=0.6]

	\node at (-8,-0.5) {$I_{1}$};
	\node[draw=none] (I1a) at (-10,0) {$ $};
	\node[draw=none] (I1b) at (-6,0) {$ $};
	\draw[line width=0.5mm] (I1a) -- (I1b);


	\node[draw=none] (I2a) at (-7,1) {$ $};
	\node[draw=none] (I2b) at (-3,1) {$ $};
	\node at (-5,1.5) {$I_{2}$};
	\draw[line width=0.5mm] (I2a) -- (I2b);
	
	\node[draw=none] (I3a) at (-13,1) {$ $};
	\node[draw=none] (I3b) at (-9,1) {$ $};
	\node at (-11,1.5) {$I_{3}$};
	\draw[dotted,line width=0.5mm] (I3a) -- (I3b);
	
	\node[draw=none] (I4a) at (-4,0) {$ $};
	\node[draw=none] (I4b) at (0,0) {$ $};
	\node at (-2,-0.5) {$I_{4}$};
	\draw[line width=0.5mm] (I4a) -- (I4b);
	
	\node[draw=none] (I5a) at (-1,1) {$ $};
	\node[draw=none] (I5b) at (3,1) {$ $};
	\node at (1,1.5) {$I_{5}$};
	\draw[line width=0.5mm] (I5a) -- (I5b);

	\end{tikzpicture} 
	\caption{Base adversarial construction}\label{fig:base-ADV}
\end{figure}
A small detail is that w.l.o.g. we can assume $I_{1}$ is always taken by the algorithm when it first arrives. Because this construction will take place a number of times during the execution, when the algorithm will already have an interval in its solution, it's useful to consider the case when $I_{1}$ is not taken by the algorithm. In this case, we start growing the chain regardless. If $I_{2}$ or $I_{4}$ are taken by the algorithm, we treat it similarly to when $I_{1}$ was taken and the algorithm kept replacing. If the algorithm hasn't taken any interval even after $I_{4}$ has arrived, the chain stops growing and we consider the intervals $(I_{1}, I_{2}, I_{4})$.\\\\
Let $I_{alg}$ be the interval taken by the algorithm (or $I_{2}$ if no intervals were taken). All remaining intervals to arrive will be subsumed by $I_{alg}$, and thus will not conflict with the two  neighboring intervals taken by $OPT$. Assuming $I_{alg}$ conflicts with one interval on the left and one on the right, that leaves space of length ($L - 2v$) for all remaining intervals. Inside that space, the exact same construction described will take place, only when the algorithm takes a new interval, it implies $I_{alg}$ is replaced. This can be thought of as going a level deeper, and using a sufficiently smaller interval length. More precisely, if $L'$ is the new (smaller) length that will be used, it must hold that $L' \leq \frac{L+v}{4}$.\\\\
After each such construction is completed, the size of the optimal solution grows by at least 2. Because there are at most $k$ different lengths, this can be repeated at most $k$ times. Finally, because the algorithm only ever keeps a single interval in its solution, it will achieve a competitive ratio of $2k$.
\end{proof}
We now extend Theorem \ref{theo:neg-2k} and show that the the $2k$ lower bound also holds for a class of randomized algorithms we call \textit{memoryless}. Intuitively, memoryless algorithms decide on taking or discarding the newly arrived interval, only by looking at the new interval, and all the intervals currently in the solution, using no information from previous online rounds. Although not randomized, it's worth noting that Algorithm \ref{alg:SUB}, along with the optimal deterministic algorithms for call control \cite{garay1997efficient}, are memoryless.
\begin{definition}[\textit{Memoryless randomized algorithm}]
We call a randomized algorithm memoryless, if a newly arrived interval $I_{new}$ is taken with probability $F(I_{new},S)$, where $S=\{I_{1},I_{2},...\}$ is the set of intervals currently in the solution, and each interval is a tuple of the form $(s_{i},f_{i})$.
\label{def:mem-rand}
\end{definition}
Notice that definition \ref{def:mem-rand} only allows us to make use of random bits of this current step, and it does not allow access to random bits from previous rounds. In particular, this definition does not capture barely random algorithms (as mentioned in the introduction), or algorithms that fall under the \textit{classify and randomly select} paradigm.\\

\begin{theorem}
No memoryless randomized algorithm can achieve a competitive ratio better than $2k$ for the
problem of unweighted interval selection with revocable decisions and k different lengths. More specifically, for all $p \in (0,1]$, there exists an $\epsilon_{p} >0$, such that the competitive ratio is greater than $2k-\epsilon_{p}$ with probability $p$.
\end{theorem}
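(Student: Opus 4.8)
The plan is to build on the nested nemesis construction behind Theorem~\ref{theo:neg-2k}, keeping its two sources of the bound intact: a nesting of depth $k-1$ (using the $k$ distinct lengths) supplies the factor $k$, and a three-interval ``middle trap'' at each level, whose two disjoint ends are charged to OPT, supplies the factor $2$. The invariant I would maintain is that at every moment the single interval the algorithm is holding \emph{conflicts} with whatever arrives next, so the solution never grows past one interval; if this is preserved across all $k$ levels, then OPT $=2k$ while ALG $=1$ and the ratio is exactly $2k$. To make this work against a memoryless rule (Definition~\ref{def:mem-rand}) without observing the coin flips, I would present each level \emph{middle-first}: first the middle interval $M_i$, then its left and right ends $L_i,R_i$, which both overlap $M_i$ but are disjoint from each other. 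As long as the algorithm is sitting on $M_i$, both ends conflict with it and cannot be grabbed; the next level is then built inside the interior gap of $M_i$ with a strictly smaller length, so the level-$(i{+}1)$ intervals again conflict with $M_i$ (or with whatever middle the algorithm has descended to).

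The genuinely new phenomenon is a single \emph{escape}: the only way to accumulate two disjoint intervals is to first replace the held middle by one end and then pick up the opposite (disjoint) end. Writing $q=F(I_{new},S)$ for the replacement probability in the canonical one-sided overlap configuration, a raw middle-first level would allow this with probability on the order of $q(1-q)$, which an algorithm can push up to a constant by taking $q\approx 1/2$. My device for killing it is to \emph{interleave $t$ fresh copies of the middle}, placed at $M_i$'s coordinates, between the two ends: each copy conflicts with whichever end the algorithm may have strayed to and pulls it back with probability $q$, while contributing nothing to OPT (they sit on $M_i$, which is not an OPT interval). A straying algorithm therefore survives all $t$ copies only with probability $(1-q)^t$, so the per-level escape probability is at most $\max_q q(1-q)^t = O(1/t)$. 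Completing all $k$ levels without ever grabbing an end then has probability at least $1-O(k/t)$; choosing the interleaving depth $t=t(p)$ large enough makes this at least $p$, and on that event the ratio equals $2k>2k-\epsilon_p$, with $\epsilon_p>0$ only absorbing the lower-order boundary contributions.

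The step I expect to be the main obstacle is controlling this uniformly over \emph{all} memoryless $F$, since $F$ may depend on the exact coordinates of the arriving interval and of the held set. The delicate regimes are $q\approx 1/2$ (handled by the pull-back copies above, via the $O(1/t)$ bound) and, more awkwardly, the acceptance-from-empty probabilities: a rule could reject a middle but accept its ends, in which case it grabs both ends for free and matches OPT at that level. I would neutralize this by guarding each middle with its own interleaved copies \emph{before} revealing the ends, so that the algorithm is driven onto the middle (or holds nothing, which only helps the lower bound) with probability $1-O(1/t)$ regardless of the position-dependent acceptance values, and by arranging the geometry so that any interval the algorithm does accept is itself subsumed or conflicted by the next level's arrivals. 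Making this guarding argument quantitatively uniform in $F$, and bookkeeping the resulting failure probabilities into a single $\epsilon_p$, is where I anticipate the bulk of the technical work.
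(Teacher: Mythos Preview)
Your proposal overcomplicates what the paper dispatches in two sentences. The paper's argument is: since the adversary knows the function $F$, it can precompute, for every configuration $(I_{new},S)$ that would arise along the deterministic construction of Theorem~\ref{theo:neg-2k}, whether $F(I_{new},S)>0$ or $F(I_{new},S)=0$. In the former case the adversary simply inserts sufficiently many identical copies of $I_{new}$; memorylessness guarantees each copy is accepted with the same probability $F(I_{new},S)$, so after enough copies the replacement has occurred with probability as close to $1$ as desired (and once it has occurred, further identical copies cannot undo it). In the latter case the algorithm behaves deterministically on that step, and the adversary follows the corresponding branch of the deterministic nemesis. This reduces the randomized lower bound directly to the deterministic one, with the $\epsilon_p$ absorbing the vanishing failure probability from the finitely many repetition blocks.

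Your attempt to build a single instance that works uniformly for all $F$ has a genuine gap. Your escape bound $\max_q q(1-q)^t=O(1/t)$ tacitly identifies the probability that an end replaces the middle with the probability that a middle copy pulls the algorithm back from the end. But these are different configurations---$(L_i,\{M_i\})$ versus $(M_i,\{L_i\})$---and a memoryless rule may assign them arbitrary, unrelated values; nothing prevents $F(L_i,\{M_i\})=1$ while $F(M_i,\{L_i\})=0$, in which case your pull-back copies are inert and the escape probability is $1$, not $O(1/t)$. The same issue undermines your acceptance-from-empty fix: if $F(M_i,\emptyset)=0$ but $F(L_i,\emptyset)=F(R_i,\emptyset)=1$, no number of middle copies lands the algorithm on $M_i$, and when the two disjoint ends arrive the algorithm takes both---your claim that ``holds nothing only helps the lower bound'' is false here, since the algorithm then matches OPT at that level and the next level (placed in $M_i$'s gap) is disjoint from what it holds. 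The remedy is exactly the paper's: let the adversary read off $F$ and branch on the zero/positive dichotomy, rather than trying to defeat every $F$ with one fixed sequence.
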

\begin{proof}
The proof is very similar to the proof of Theorem \ref{theo:neg-2k}. The instance has the same structure as the one described in the proof of Theorem \ref{theo:neg-2k}, with the difference that whenever a new interval is taken with probability $p>0$, the adversary will have to add as many copies of that interval as necessary, so that it's taken w.h.p. Figure \ref{fig:no-mem-rand} shows an example of multiple copies of a new interval, ensuring that a replacement happens w.h.p.
\end{proof}
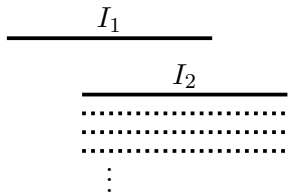
\begin{figure}[H]
	\centering
	
	\begin{tikzpicture}[scale=0.5]

	\node at (3,1) {$I_{1}$};
	\node[draw=none] (I1a) at (0,0.5) {$ $};
	\node[draw=none] (I1b) at (6,0.5) {$ $};
	\draw[line width=0.5mm] (I1a) -- (I1b);

    \node at (5,-0.5) {$I_{2}$};
	\node[draw=none] (I2a) at (2,-1) {$ $};
	\node[draw=none] (I2b) at (8,-1) {$ $};
	\draw[line width=0.5mm] (I2a) -- (I2b);
	\node[draw=none] (I2aa) at (2,-1.5) {$ $};
	\node[draw=none] (I2bb) at (8,-1.5) {$ $};
	\draw[dotted,line width=0.5mm] (I2aa) -- (I2bb);
	\node[draw=none] (2I2aa) at (2,-2) {$ $};
	\node[draw=none] (2I2bb) at (8,-2) {$ $};
	\draw[dotted,line width=0.5mm] (2I2aa) -- (2I2bb);
	\node[draw=none] (3I2aa) at (2,-2.5) {$ $};
	\node[draw=none] (3I2bb) at (8,-2.5) {$ $};
	\draw[dotted,line width=0.5mm] (3I2aa) -- (3I2bb);
	
	\node[draw=none] (d1) at (3,-3.3) {$\rvdots$};

	\end{tikzpicture} 
	\caption{Replacing $I_{1}$ w.h.p when $F(I_{2},\{I_{1}\}) > 0.$
}\label{fig:no-mem-rand}
\end{figure}
It is worth mentioning that similar to how we extend our lower bound to hold for memoryless randomized algorithms, one can extend the $\log(n)$ lower bound for call control \cite{garay1997efficient} to also hold for memoryless randomized algorithms. We also prove the following lower bound on all randomized algorithms and instances with no proper inclusions, capturing the case of $k=1$.
\begin{lemma}
    No randomized algorithm can achieve a competitive ratio better than $\frac{4}{3}$ for the case of unweighted instances with no proper inclusions.
\end{lemma}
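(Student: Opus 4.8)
The plan is to prove this lower bound by exhibiting a single fixed input distribution (an oblivious adversary) and arguing that on it no algorithm can do better than $4/3$ in the $\E[OPT]/\E[ALG]$ sense used in the paper for randomized algorithms. Since the gadget will use only one interval length, the ``no proper inclusions'' hypothesis holds automatically, so this simultaneously covers the $k=1$ case.

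First I would fix the gadget. All intervals have a common length, say $2$. The adversary always begins by releasing two partially conflicting intervals $I_1=[0,2)$ and $I_2=[1,3)$, in that order. It then flips a fair coin: with probability $1/2$ it releases a \emph{left} interval $I_L=[-1,1)$ (which conflicts with $I_1$ but is disjoint from $I_2$), and with probability $1/2$ it releases a \emph{right} interval $I_R=[2,4)$ (which conflicts with $I_2$ but is disjoint from $I_1$). In the left scenario the unique feasible pair is $\{I_L,I_2\}$, and in the right scenario it is $\{I_1,I_R\}$, so $OPT=2$ on every outcome, and one checks directly that no interval contains another.

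The heart of the argument is that $I_1$ and $I_2$ conflict, so after both have arrived any algorithm retains at most one of them; moreover a discarded interval can never be reacquired. Writing $p_1,p_2$ for the probabilities (over the algorithm's coins) that it holds $I_1$, respectively $I_2$, at the moment the extension arrives, we have $p_1+p_2\le 1$. In the left scenario, ending with two intervals is possible only if the algorithm both holds $I_2$ and then accepts $I_L$, so $\E[ALG\mid \text{left}]\le 1+p_2$; symmetrically $\E[ALG\mid \text{right}]\le 1+p_1$. Averaging over the fair coin gives $\E[ALG]\le 1+\tfrac{p_1+p_2}{2}\le \tfrac{3}{2}$, while $\E[OPT]=2$, so the ratio $\E[OPT]/\E[ALG]$ is at least $4/3$.

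The steps I expect to require the most care are (i) phrasing the averaging so that it matches the $OPT/\E[ALG]$ definition of the randomized competitive ratio, i.e. justifying that it suffices to bound $\E_D[OPT]/\sup_A \E_D[ALG_A]$ over the fixed input distribution $D$ (the standard minimax/Yao direction), and (ii) the case analysis guaranteeing $ALG\le 1$ whenever the algorithm enters the extension holding the ``wrong'' interval or nothing at all. In particular I must rule out any way of finishing with two intervals other than the single feasible pair, which is exactly where the irrevocability of discards is used. The geometry and the $OPT=2$ bookkeeping are routine once the gadget is pinned down.
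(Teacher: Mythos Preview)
Your proposal is correct and follows essentially the same construction as the paper: two initial partially conflicting unit-length intervals, then a fair coin determining whether the third interval lands on the left (conflicting only with $I_1$) or on the right (conflicting only with $I_2$). Your write-up is in fact more careful than the paper's---you make the Yao direction explicit and introduce the $p_1,p_2$ bookkeeping rather than just asserting ``ends up with a single interval with probability $\tfrac12$''---but the idea and the instance are the same.
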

\begin{proof}
    Consider two different input sequences, $S_{1}$ and $S_{2}$, each consisting of a chain of intervals. The first two intervals $(I_{1},I_{2})$ are the same in both sequences and partially conflict, with $f_{2} > f_{1}$. $S_{1}$ has the third interval in the sequence partially conflict with $I_{2}$, whereas $S_{2}$ has the third interval conflict with $I_{1}$.
    The adversary chooses one of the two sequences with probability $\frac{1}{2}$. Regardless of what the algorithm does after the arrival of $I_{2}$, it ends up with a single interval in its solution with probability $\frac{1}{2}$, whereas the optimal solution is always of size $2$. This leads to an expected solution size of $\frac{3}{2}$, and a competitive ratio of $\frac{4}{3}$.    
\end{proof}

\subsection{Connection to Call Control}
In this section we relate our results to those of call control on the line.
In the problem of call control, we are given a graph $G(V,E)$, and requests (intervals) correspond to paths on the graph. We note that in the call control literature, it is assumed that requests can come in any order. The length of a request is defined as the length of the corresponding path, and a valid solution is a set of edge-disjoint paths. The objective is to maximize the number of accepted paths. The special case of line graphs is of most relevance to us, with Garay et al. \cite{garay1997efficient} giving an optimal $\log(n)$-competitive algorithm, where $n=|V|$. Their algorithm is similar to ours, with one important additional replacement rule. If the new interval's length is less than half the length of the shortest interval it's conflicting with, the new interval is taken by displacing whatever is necessary. Notice how by fixing each edge to be of the same Euclidean length, that notion of length matches ours, without modifying the instance in any meaningful way.\\\\
An apparent difference between our model and that of call control, is that in the latter, we're initially given the graph. It is not clear how an algorithm that uses that information would operate in our setting. 
The optimal algorithm of \cite{garay1997efficient} does not use that information. We can use such an algorithm on an instance of AOIS, using our definition of length. To see how the $\log(n)$ upper bound would still apply, we describe a way to add vertices on the line after the entire AOIS instance has been revealed, such that we can view it as an instance of call control. There isn't just one way to add those vertices \footnote{Notice how we can take an instance of call control on $n$ points, and repeatedly add $n-1$ new points, one new point between every pair of consecutive points, while keeping it a valid instance.}, and ideally we're interested in the minimum number of vertices for a ``valid'' call control instance to be formed\footnote{The total number of vertices we add to view it as an instance of call control has no impact on the quality (i.e., the number of intervals accepted) of the solution. The $\log(n)$ is in terms of the minimum n.}. There are two requirements for the point-adding construction to be considered valid: (a) the points must be equally spaced, so that the two definitions of length in our model and call control match, and (b) every start and end point of an interval must coincide with a point. If all interval start and end points are rational numbers, we can multiply them by a common denominator, and use integer points. If there exist interval start/end points that are irrational, we can approximate the call control instance by adding sufficiently many points that are sufficiently close to each other, so that our comparison of two lengths in the AOIS instance, gives the same result as in the final call control instance. This is a technical issue we leave as an open problem.
Such a construction allows us to view the AOIS instance as an instance of call control. If the resulting graph was given to us a priori and we applied the call control algorithm to it, the final solution would be the same. This also allows us to use the $(\sqrt{5}+2)$-competitive algorithm by Garay et al. \cite{garay1997efficient} in the case of interval weights proportional to their length.\\\\
Applying an algorithm for AOIS on a call control instance is more straightforward. W.l.o.g. we can fix edge lengths to be unit lengths and directly apply the algorithm, while achieving the same competitive ratio.\\\\
To see how our $2k$ lower bound applies to call control on the line, notice that being allowed $k$ different lengths, there need to be enough points to allow the adversary to fit in the instance of Theorem \ref{theo:neg-2k}. Given that construction, we can compute a lower bound on the number of points (vertices) required. The base (shortest) intervals need two points each. Each level (base one included) has at most four intervals, all of the same length. Because the points are equally spaced, same-length intervals cover the same number of points. If an interval on level $j$ covers $x$ points, an interval of the upper (longer) level $j+1$ covers at least $4x$ points. This results in a lower bound of $2^{2k+1}$ points. Another way to view this, is that if the number of different lengths is sufficiently small (compared to $n$), the $2k$ lower bound applies, and our algorithm becomes optimal for call control on the line. In particular, this argument does not contradict the known $log(n)$ lower bound.\\\\
Whether the algorithm by Garay et al. can be forced to a competitive ratio worse than $2k$ for AOIS is open. When deciding on which algorithm to use, potential knowledge about the instance structure may be of help. This becomes apparent from the two following observations.
\begin{observation}
There exists an instance where the algorithm by Garay et al. \cite{garay1997efficient} achieves a competitive ratio of $(2k-2)$, whereas algorithm \ref{alg:SUB} gets 1.
\end{observation}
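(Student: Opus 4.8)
The plan is to build a single adversarial instance on which Algorithm \ref{alg:SUB} accepts a maximum independent set (hence is optimal) while the extra replacement rule of Garay et al.\ misfires at every length scale. Everything is driven by the one place the two algorithms disagree: when a new interval $I$ partially conflicts with a held interval $I'$ and $|I|<\tfrac12|I'|$, Algorithm \ref{alg:SUB} rejects $I$ (it is not contained in $I'$) and keeps $I'$, whereas Garay's half-length rule discards $I'$ and takes $I$. I would force this ``take the short partial overlapper'' move on Garay repeatedly, each time throwing away an interval that belongs to an optimal solution, arranging that the discarded intervals are exactly the ones \ref{alg:SUB} retains.

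First I would present the base gadget (the $k=2$ case): two disjoint intervals $L,L'$ of the longest length meeting at a point, taken by both algorithms, followed by one shorter interval $s$ with $|s|<\tfrac12|L|$ straddling that meeting point. Garay's rule fires and displaces both long intervals, leaving Garay with $\{s\}$, while \ref{alg:SUB} rejects $s$ and keeps the optimal set $\{L,L'\}$; this already gives ratio $2=2k-2$. The general instance is obtained by recursion into smaller length classes: at each level two fresh, pairwise disjoint intervals are introduced that \ref{alg:SUB} keeps, and one further half-rule replacement is forced that collapses Garay's solution one more step. The intended outcome is that after the $k$-th length class \ref{alg:SUB} holds $2(k-1)$ disjoint optimal intervals while Garay holds a single interval. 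This is precisely the transfer-charging phenomenon of Theorem \ref{theo:pos-2k} run in reverse: Garay's surviving interval is the bottom of a replacement chain whose ancestors each blocked two optimal intervals, so $OPT=2(k-1)$ is charged against one accepted interval, giving competitive ratio $2k-2$ for Garay and $1$ for \ref{alg:SUB} (the same reverse bookkeeping underlies the lower bound reasoning of Theorem \ref{theo:neg-2k}).

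The hard part will be realizing this accounting geometrically while spending only \emph{one} new length per recursion level, since that is exactly what separates the claimed $2k-2$ from the weaker bound obtained if each level needs both a ``flank'' length and a separate ``straddler'' length. Each level must simultaneously (i) make the half-rule fire for Garay, which means the pushing interval has to be shorter than half of \emph{everything} it currently conflicts with, so I must guarantee it touches none of the already-placed shorter intervals; (ii) make \ref{alg:SUB} reject that pushing interval as a partial conflict (never a containment) with an interval it actually holds, while still accepting the two new optimal intervals; and (iii) keep all accepted intervals across levels pairwise disjoint so they form a genuine optimal solution. The tension is that the longer optimal intervals occupy a whole neighbourhood of the recursion point, so the two optimal intervals of a deeper level must be fitted into the narrow gap spanned by the current straddling interval without colliding with the shallower-level intervals and without being \emph{contained} in Garay's current interval (which would let Garay absorb them harmlessly instead of being blocked). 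Pinning down the arrival order and the exact lengths and offsets so that all three conditions hold at every scale, and checking that no stray conflict spoils the half-rule test, is the delicate step; once a valid layout is fixed the ratio follows immediately from the reverse charging count above.
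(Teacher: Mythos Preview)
Your base gadget for $k=2$ is correct and indeed matches the final step of the paper's construction. The recursion, however, does not work as you describe. After the first straddler, Garay holds a single short interval $s$ sitting strictly inside the span of the two long intervals $L,L'$ that Algorithm~\ref{alg:SUB} holds. Any two ``fresh, pairwise disjoint'' intervals that Algorithm~\ref{alg:SUB} can accept must lie outside $L\cup L'$ (otherwise they would be contained in one of $L,L'$ and Algorithm~\ref{alg:SUB} would \emph{replace}, not grow). But anything outside $L\cup L'$ is also disjoint from $s$, so Garay accepts those two intervals as well. Garay now holds three intervals, and a single short straddler cannot collapse three pairwise disjoint intervals back to one. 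So the invariant ``Garay holds one interval'' breaks immediately at the second level, and the charging you sketch never gets off the ground. What you are missing is a mechanism that produces intervals Algorithm~\ref{alg:SUB} accepts but Garay \emph{rejects}; the straddler trick only gives you the reverse direction.

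The paper's construction supplies exactly this missing mechanism and looks quite different from a nested straddler recursion. It places the two longest intervals $L_1,R_1$ far apart (not touching) and builds two inward-growing chains $L_1,L_2,\ldots,L_{k-1}$ and $R_1,R_2,\ldots,R_{k-1}$ with $|L_{i+1}|<\tfrac12|L_i|$, so Garay replaces along each chain while Algorithm~\ref{alg:SUB} rejects every $L_i,R_i$ for $i\ge 2$ (partial conflict, no containment). The key idea is that after $L_i$ arrives, a shifted copy $L'_i$ of the same length is sent: it overlaps $L_i$ (so Garay, who currently holds $L_i$, rejects it since the half-rule does not fire on equal lengths) but is placed flush with the end of $L'_{i-1}$ (respectively $L_1$), so it does not conflict with anything Algorithm~\ref{alg:SUB} holds and is accepted. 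This is how Algorithm~\ref{alg:SUB}'s solution grows by two per level while Garay's stays at two. Only at the very end does a single straddler $M$ knock out $L_{k-1}$ and $R_{k-1}$, leaving Garay with one interval against Algorithm~\ref{alg:SUB}'s $2(k-1)$.
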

This instance is essentially two long chain-like structures that meet in the middle, and it is depicted in figure \ref{fig:garay-2k}. A single side of the construction is better shown in figure \ref{fig:garay-2k-helper}.  The arrival sequence is $L_{1},R_{1},L_{2},R_{2},L'_{2},R'_{2},...,L_{k-1},R_{k-1},L'_{k-1},R'_{k-1},M$. We have that $|L_{i}|=|L'_{i}|=|R_{i}|=|R'_{i}|$, and $|L_{i}|>2|L_{i+1}|$. The algorithm in \cite{garay1997efficient} would always take the next interval on the chains $L_{1},L_{2},...$ and $R_{1},R_{2},...$, ending up with only $M$, which would displace both intervals in the solution $\{L_{k-1},R_{k-1}\}$. The optimal solution $\{L_{1},R_{1},L'_{2},R'_{2},...,L'_{k-1},R'_{k-1}\}$ is obtained by our algorithm, and it is of size $(2k-2)$.
\begin{figure}[H]
	\centering
	
	\begin{tikzpicture}[scale=0.4]
	
	\node at (-12,-0.5) {$L_{1}$};
	\node[draw=none] (L1a) at (-16,-1) {$ $};
	\node[draw=none] (L1b) at (-8,-1) {$ $};
	\draw[line width=0.5mm] (L1a) -- (L1b);

 \node at (-7,-1) {$L_{2}$};
	\node[draw=none] (L1a) at (-9,-1.5) {$ $};
	\node[draw=none] (L1b) at (-5,-1.5) {$ $};
	\draw[line width=0.5mm] (L1a) -- (L1b);

    \node at (-6,-2.5) {$L'_{2}$};
	\node[draw=none] (L11a) at (-8,-2) {$ $};
	\node[draw=none] (L11b) at (-4,-2) {$ $};
	\draw[line width=0.5mm] (L11a) -- (L11b);

 \node[draw=none] (d1) at (-4,-1.5) {$\dots$};
 \node[draw=none] (d2) at (4,-1.5) {$\dots$};
	
	\node at (11.5,-0.5) {$R_{1}$};
	\node[draw=none] (R1a) at (7.5,-1) {$ $};
	\node[draw=none] (R1b) at (15.5,-1) {$ $};
	\draw[line width=0.5mm] (R1a) -- (R1b);

\node at (6.5,-1) {$R_{2}$};
	\node[draw=none] (R11a) at (4.5,-1.5) {$ $};
	\node[draw=none] (R11b) at (8.5,-1.5) {$ $};
	\draw[line width=0.5mm] (R11a) -- (R11b);

    \node at (5.5,-2.5) {$R'_{2}$};
	\node[draw=none] (R11a) at (3.5,-2) {$ $};
	\node[draw=none] (R11b) at (7.5,-2) {$ $};
	\draw[line width=0.5mm] (R11a) -- (R11b);

	\node[draw=none] (RLka) at (-1,-1.25) {$ $};
	\node[draw=none] (RLkb) at (0.5,-1.25) {$ $};
	\draw[line width=0.5mm] (RLka) -- (RLkb);
	\node at (-0.25,-0.75) {$M$};

    \node[draw=none] (Lka) at (-3.2,-2) {$ $};
	\node[draw=none] (Lkb) at (-0.2,-2) {$ $};
	\draw[line width=0.5mm] (Lka) -- (Lkb);
	\node at (-1.75,-2.5) {$L'_{k-1}$};
    \node[draw=none] (RLka) at (-0.3,-2) {$ $};
	\node[draw=none] (RLkb) at (2.7,-2) {$ $};
	\draw[line width=0.5mm] (RLka) -- (RLkb);
	\node at (1.25,-2.5) {$R'_{k-1}$};
	
	\node[draw=none] (Lka) at (-3.5,-1) {$ $};
	\node[draw=none] (Lkb) at (-0.5,-1) {$ $};
	\node at (-2,-0.5) {$L_{k-1}$};
	\draw[line width=0.5mm] (Lka) -- (Lkb);

	\node[draw=none] (Lka) at (0,-1) {$ $};
	\node[draw=none] (Lkb) at (3,-1) {$ $};
	\node at (1.5,-0.5) {$R_{k-1}$};
	\draw[line width=0.5mm] (Lka) -- (Lkb);

	
	\end{tikzpicture} 
	\caption{Instance where algorithm by Garay et al.\cite{garay1997efficient} gets CR $(2k-2)$.}\label{fig:garay-2k}
\end{figure}
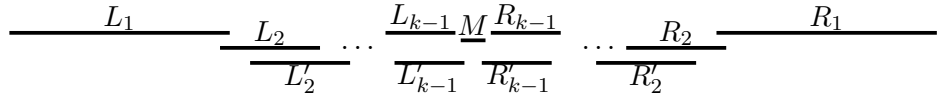
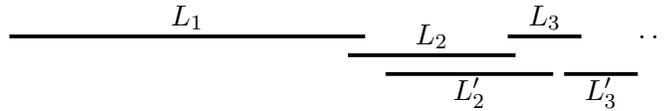
\begin{figure}[H]
	\centering
	
	\begin{tikzpicture}[scale=0.5]
	
	\node at (-12,-0.5) {$L_{1}$};
	\node[draw=none] (L1a) at (-17,-1) {$ $};
	\node[draw=none] (L1b) at (-7,-1) {$ $};
	\draw[line width=0.5mm] (L1a) -- (L1b);

 \node at (-5.5,-1) {$L_{2}$};
	\node[draw=none] (L1a) at (-8,-1.5) {$ $};
	\node[draw=none] (L1b) at (-3,-1.5) {$ $};
	\draw[line width=0.5mm] (L1a) -- (L1b);

    \node at (-4.5,-2.5) {$L'_{2}$};
	\node[draw=none] (L11a) at (-7,-2) {$ $};
	\node[draw=none] (L11b) at (-2,-2) {$ $};
	\draw[line width=0.5mm] (L11a) -- (L11b);

 \node at (-2.5,-0.5) {$L_{3}$};
	\node[draw=none] (L3a) at (-3.75,-1) {$ $};
	\node[draw=none] (L3b) at (-1.25,-1) {$ $};
	\draw[line width=0.5mm] (L3a) -- (L3b);

    \node at (-1,-2.5) {$L'_{3}$};
	\node[draw=none] (L33a) at (-2.25,-2) {$ $};
	\node[draw=none] (L33b) at (0.25,-2) {$ $};
	\draw[line width=0.5mm] (L33a) -- (L33b);

  \node[draw=none] (d1) at (0.5,-1) {$\dots$};

	
	\end{tikzpicture} 
	\caption{Single chain from instance of figure \ref{fig:garay-2k}}\label{fig:garay-2k-helper}
\end{figure}
\begin{observation}
    There exists an instance where algorithm \ref{alg:SUB} achieves a competitive ratio of $2k$, whereas the algorithm by Garay et al. gets 1.
\end{observation}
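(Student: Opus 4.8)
The plan is to construct the ``dual'' of the instance used in the previous observation. There, a \emph{chain} of partially overlapping intervals with geometrically shrinking lengths fooled the algorithm of Garay et al.\ into repeatedly invoking its half-length replacement rule, collapsing its solution to the single interval $M$, while Algorithm \ref{alg:SUB} retained the optimal set. Here I would reverse the roles by using a \emph{nested} (containment) chain of shrinking intervals $C_1 \supsetneq C_2 \supsetneq \cdots \supsetneq C_k$, together with, at each level $j$, two ``profit'' intervals $NL_j, NR_j$ that partially conflict with $C_j$ near its left and right endpoints but are pairwise disjoint across all levels. The lengths would be chosen so that $|C_{j+1}| < \tfrac12 |C_j|$ and $|NL_j| = |NR_j| < \tfrac12 |C_j|$, with the overlap at each partial conflict made arbitrarily small so that the $2k$ intervals $\{NL_j, NR_j\}_{j}$ form an independent set; this is the optimal solution, of size $2k$.

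The arrival order would interleave each central interval with its two neighbours: $C_1; NL_1, NR_1; C_2; NL_2, NR_2; \ldots; C_k; NL_k, NR_k$. First I would trace Algorithm \ref{alg:SUB}: it accepts $C_1$, \emph{rejects} $NL_1$ and $NR_1$ because they only partially conflict with $C_1$ and are not subsumed, then replaces $C_1$ by $C_2$ (which is properly contained), rejects $NL_2, NR_2$, and so on down the chain, finishing with the single innermost interval. Since the charging mapping of the proof of Theorem \ref{theo:pos-2k} is realised with equality here --- each of the $k$ levels contributes exactly two profit intervals charged, through direct and transfer charging, to the one surviving interval --- Algorithm \ref{alg:SUB} attains competitive ratio exactly $2k$. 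The whole point of the construction is that every profit interval is a partial conflict, which Algorithm \ref{alg:SUB} is structurally unable to accept.

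Next I would trace the algorithm of Garay et al.\ on the same sequence. Because $|NL_j| < \tfrac12 |C_j|$, its extra rule lets it \emph{displace} the current central $C_j$ upon the arrival of $NL_j$, after which the disjoint $NR_j$ is taken with no conflict; the deeper central $C_{j+1}$, arriving strictly inside $C_j$ and disjoint from all previously collected neighbours, is momentarily accepted but is in turn shed one level later by $NL_{j+1}$. Carrying this through, Garay's algorithm should finish holding exactly $\{NL_j, NR_j\}_j$, i.e.\ the optimum, giving competitive ratio $1$. The main obstacle is precisely this last verification: I must check that temporarily accepting each nested central never permanently blocks a later profit interval, and --- the genuinely delicate point --- that the lengths can be chosen to keep \emph{every} neighbour below the half-length threshold of its own central (including at the innermost level, where the shortest central is involved) while using only $k$ distinct lengths and keeping all $2k$ neighbours mutually disjoint. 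Arranging these inequalities to hold simultaneously, by making successive lengths drop by strictly more than a factor of two and the overlaps vanishingly small, is the crux of the argument; the remaining bookkeeping mirrors that of Figure \ref{fig:4-tight} and of the construction in Theorem \ref{theo:neg-2k}.
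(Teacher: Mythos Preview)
Your construction is essentially the same as the paper's: a nested chain of ``central'' intervals $C_1\supsetneq\cdots$, with two partially-overlapping ``profit'' intervals at each level, presented in the order central--neighbours--central--neighbours. The paper attributes this instance to lecture notes of Azar and gives only a one-line justification (``our algorithm ends up with a single interval, $|OPT|=2k$''), so your trace of both algorithms is considerably more detailed than what appears there, and the reasoning you give for why Algorithm~\ref{alg:SUB} stalls (every profit interval is a \emph{partial} conflict) and why Garay et al.\ recovers OPT (every profit interval triggers the half-length rule) is exactly right.

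The one point worth flagging is the ``only $k$ distinct lengths'' constraint you single out as the crux. You are right to be cautious: with $k$ nested centrals and a pair of neighbours at \emph{every} level, the innermost neighbours must be strictly shorter than half of $|C_k|$, which forces a $(k{+}1)$st length no matter how you reuse central lengths for the outer neighbours. The paper's own figure in fact draws all $R_i$ with a common length distinct from every $L_j$, i.e.\ $k$ central lengths plus one more, so the instance as depicted has $k{+}1$ lengths while $|OPT|=2k$; the ``$2k$'' in the observation refers to the number of nesting levels rather than to twice the number of distinct lengths, and the paper does not reconcile this. So your worry is well-founded, but it is not a defect in your argument relative to the paper --- both constructions incur the same off-by-one, and the observation should be read as qualitative (there is an instance on which Algorithm~\ref{alg:SUB} is as bad as its worst-case bound while Garay et al.\ is optimal).
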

This instance \footnote{This instance is taken from lecture notes by Yossi Azar \url{http://www.cs.tau.ac.il/~azar/Online-Class10.pdf} .} is shown in figure \ref{fig:azar-fig}. Our algorithm ends up with a single interval in its solution, whereas |OPT| = $2k$.
\begin{figure}[H]
	\centering
	
	\begin{tikzpicture}[scale=0.8]

	\node at (-5,0.5) {$L_{1}$};
	\node[draw=none] (I1a) at (-10,0) {$ $};
	\node[draw=none] (I1b) at (0,0) {$ $};
	\draw[line width=0.5mm] (I1a) -- (I1b);

	\node at (-13,1) {$t$};
 \draw [-to](-12.5,0) -- (-12.5,2);
	
	\node[draw=none] (I2a) at (-1,0.5) {$ $};
	\node[draw=none] (I2b) at (1,0.5) {$ $};
	\node at (0,1) {$R_{2}$};
	\draw[line width=0.5mm] (I2a) -- (I2b);

	\node[draw=none] (I3a) at (-11,0.5) {$ $};
	\node[draw=none] (I3b) at (-9,0.5) {$ $};
	\node at (-10,1) {$R_{1}$};
	\draw[line width=0.5mm] (I3a) -- (I3b);
\node at (-5,2) {$L_{2}$};
	\node[draw=none] (L2a) at (-7,1.5) {$ $};
	\node[draw=none] (L2b) at (-3,1.5) {$ $};
	\draw[line width=0.5mm] (L2a) -- (L2b);

\node[draw=none] (R4a) at (-4,2) {$ $};
	\node[draw=none] (R4b) at (-2,2) {$ $};
	\node at (-3,2.5) {$R_{4}$};
	\draw[line width=0.5mm] (R4a) -- (R4b);

\node[draw=none] (R3a) at (-8,2) {$ $};
	\node[draw=none] (R3b) at (-6,2) {$ $};
	\node at (-7,2.5) {$R_{3}$};
	\draw[line width=0.5mm] (R3a) -- (R3b);
 
\node[draw=none] (d5) at (-5,3.5) {$\rvdots$};
 
	\end{tikzpicture} 
	\caption{Instance where our algorithm gets competitive ratio $2k$, whereas Garay et al. get 1.}\label{fig:azar-fig}
\end{figure}
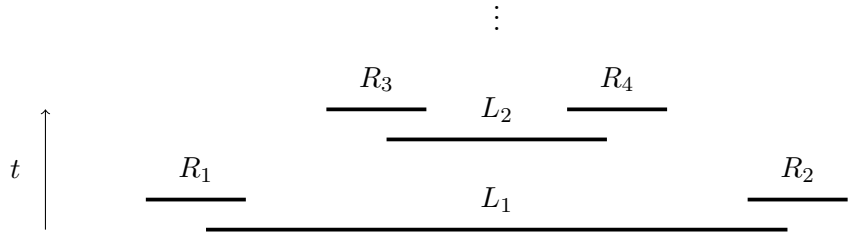

\section{Arbitrary Weights}
The case of intervals having an arbitrary weights has previously been considered for the case of single-length instances and ordered arrivals. Woeginger \cite{woeginger1994line} gives an optimal deterministic algorithm that is 4-competitive. Fung et al. \cite{fung2014improved} give a barely random algorithm that is 2-competitive, and show that it is optimal amongst barely random algorithms that choose between two deterministic algorithms. Woeginger \cite{woeginger1994line} shows that in the case of two different lengths, there does not exist a deterministic algorithm with finite competitive ratio. We show how to combine the barely random algorithm of Fung et al., with a classify and randomly select algorithm, to obtain a randomized algorithm for the any-order case, that achieves a competitive ratio of $2k$, when there are $k$ different lengths.\\\\
First, one can observe that the 2-competitive single-length algorithm by Fung et al. \cite{fung2014improved} (Theorem 3.1), works even in the case of any-order arrivals. Our algorithm (denoted as $ARB$), which requires knowledge of all the different lengths of the instance, works as follows: Choose one of $k$ lengths, uniformly at random. Then execute the algorithm of Fung et al., looking only at intervals of the chosen length.

\begin{theorem}
    Algorithm $ARB$ achieves a competitive ratio of $2k$, for the problem of any-order interval selection, with $k$ different lengths and arbitrary weights.
\end{theorem}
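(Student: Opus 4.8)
The plan is to run the standard classify-and-randomly-select accounting, charging the expected value of $ARB$ against an optimum that has been decomposed by length class. First I would fix an optimal solution and partition it according to the $k$ distinct lengths, writing it as a disjoint union $OPT_1 \cup \dots \cup OPT_k$, where $OPT_i$ collects exactly those intervals of the optimum whose length equals the $i$-th value $\ell_i$. Writing $w(\cdot)$ for the total weight of a set of intervals, disjointness of the classes gives $w(OPT) = \sum_{i=1}^{k} w(OPT_i)$.

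The key observation is that each class $OPT_i$ is itself a feasible single-length solution: as a subset of a feasible set its intervals are pairwise non-conflicting, and by construction they all share the length $\ell_i$. Hence, letting $OPT^{(i)}$ denote the optimum value of the subinstance obtained by deleting every interval of length other than $\ell_i$, we have $OPT^{(i)} \ge w(OPT_i)$. When $ARB$ selects $\ell_i$ — an event of probability $1/k$ — it runs the Fung et al. algorithm on precisely this subinstance, and by the $2$-competitiveness of that algorithm (which, as observed just above, survives any-order arrivals for single-length inputs) the expected weight it collects, over the internal coins of the subroutine, is at least $\tfrac{1}{2}\,OPT^{(i)} \ge \tfrac{1}{2}\,w(OPT_i)$.

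Combining the two layers of randomness — the uniform length choice and the subroutine's own coins — through the tower rule and linearity of expectation, I would conclude
\[
\E[ALG] \;=\; \frac{1}{k}\sum_{i=1}^{k}\E\!\left[ALG \,\middle|\, \ell_i \text{ chosen}\right] \;\ge\; \frac{1}{k}\sum_{i=1}^{k}\frac{1}{2}\,w(OPT_i) \;=\; \frac{1}{2k}\,w(OPT),
\]
which rearranges to $CR(ARB) = w(OPT)/\E[ALG] \le 2k$, as claimed.

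I do not expect a genuine obstacle, as this is a clean reduction; the only points deserving care are (i) confirming that each length class of the optimum is a legal input to the single-length subroutine, which follows because feasibility is closed under taking subsets, and (ii) applying the cited $2$-competitive bound in the any-order regime rather than the start-time-ordered one. The latter is exactly the preliminary observation preceding the theorem, so in the write-up I would make explicit that the Fung et al. analysis is order-oblivious on single-length instances before invoking it.
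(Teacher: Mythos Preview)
Your proposal is correct and follows essentially the same classify-and-randomly-select accounting as the paper: invoke the any-order $2$-competitiveness of the Fung et al.\ subroutine on each length class, then average over the uniform length choice. The only cosmetic difference is that the paper defines $OPT_i$ directly as the optimum of the length-$\ell_i$ subinstance and bounds $\sum_i OPT_i \ge OPT$, whereas you first partition a fixed global optimum and then pass through $OPT^{(i)} \ge w(OPT_i)$; these are the same step unpacked.
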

\begin{proof}
    Let $L_{1},L_{2},...,L_{k}$ be all the different lengths of an instance. Associated with length $L_{i}$, is a sub-instance $C_{i}$, comprised only of the intervals of length $L_{i}$. Let $OPT_{i}$ denote the weight of an optimal solution on sub-instance $C_{i}$. The expected performance of the algorithm can be bounded as follows:
    $$ \E[ALG] \geq \frac{1}{k}\frac{OPT_{1}}{2} + \frac{1}{k}\frac{OPT_{2}}{2} +...+\frac{1}{k}\frac{OPT_{k}}{2} \geq \frac{OPT}{2k}$$
    The first inequality holds because applying Fung et al. \cite{fung2014improved} on $C_{i}$ gives a solution of weight at least $\frac{OPT_{i}}{2}$. The second inequality holds because for every length $L_{j}$, the total weight of the intervals of length $L_{j}$ in the final solution, is at most $OPT_{j}$.
\end{proof}
We note that the algorithm does not need to know the actual lengths beforehand, or even $k$. The algorithm can start working with the first length that appears. When a second length arrives, the algorithm discards its current solution and chooses the new length with probability $\frac{1}{2}$. More generally, when the $i$th length arrives, the algorithm starts over using the new length with probability $\frac{1}{i}$. One can see that the probability that any length is chosen is $\frac{1}{k}$. Moreover, by replacing the 2-competitive arbitrary weights algorithm with a simple greedy algorithm, we get a randomized algorithm for the unweighted case that is $2k$-competitive and does not use revoking (as long as we know $k$).

\section{Random Order}
In this section, we assume the adversary chooses the instance configuration, but the intervals arrive in a random order. We consider unweighted, single-lengthed instances, and deterministic memoryless algorithms with revocable acceptances.
We consider various cases and show that the only type of algorithm that can possibly benefit from the random order model is a \textit{one-directional} algorithm, namely an algorithm that only replaces intervals on the left side, or only on the right side, regardless of the amount of overlap. For any other algorithm, we show how the adversary can enforce a competitive ratio of 2, resulting in no benefit over adversarial arrivals for single-lengthed instances. \\\\ 

 On the instances we present, an algorithm only keeps one interval in its solution at any given time w.h.p., so the decision on taking or discarding a newly arrived interval, depends only on the local conflicts. The behavior of an algorithm is described by two functions, $F_{l}$ and $F_{r}$: $F_{l}(v) \in \{0,1\}$ denotes whether the algorithm replaces $I_{old}$ with $I_{new}$ when the conflict is on the left of $I_{old}$, and the overlap is equal to $v$. $F_{r}$ is defined similarly for conflicts on the right. We are concerned with single-lengthed instances, where there can only be partial conflicts. We also assume that a new interval is never taken if it conflicts with more than one existing intervals in the solution, and show why such an action cannot benefit the algorithm.\\\\
One might notice that in the above description, information about the endpoints of the conflicting intervals is omitted. This was to improve readability, and we do in fact allow a deterministic memoryless algorithm to know the endpoints of intervals. The lower bounds presented in this section still hold, regardless of where the intervals are placed on the line.\\\\

Let $L$ denote the interval length of an instance.\\
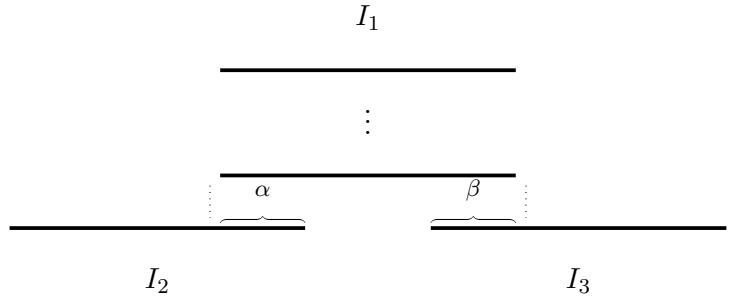
\begin{figure}[H]
	\centering
	
	\begin{tikzpicture}[scale=0.7]

	\node at (-3,1) {$I_{1}$};
	\node[draw=none] (I1a) at (-6,0) {$ $};
	\node[draw=none] (I1b) at (0,0) {$ $};
	\draw[line width=0.5mm] (I1a) -- (I1b);
	
	\node[draw=none] (d1) at (-3,-1) {$\rvdots$};
	
	\node[draw=none] (I11a) at (-6,-2) {$ $};
	\node[draw=none] (I11b) at (0,-2) {$ $};
	\draw[line width=0.5mm] (I11a) -- (I11b);

	\node[draw=none] (I2a) at (-10,-3) {$ $};
	\node[draw=none] (I2b) at (-4,-3) {$ $};
	\node at (-7,-4) {$I_{2}$};
	\draw[line width=0.5mm] (I2a) -- (I2b);

	\node[draw=none] (I3a) at (-2,-3) {$ $};
	\node[draw=none] (I3b) at (4,-3) {$ $};
	\node at (1,-4) {$I_{3}$};
	\draw[line width=0.5mm] (I3a) -- (I3b);
	
	\node[draw=none] (I2c) at (-6,-3) {$ $};
	\node[draw=none] (I3c) at (0,-3) {$ $};
	
	\draw[dotted] (I11a) -- (I2c) ;
	\draw [decorate,decoration={calligraphic brace,raise=2pt}]
	(I2c) -- (I2b) node [black,midway,yshift=0.5 cm]
	{\footnotesize $\alpha$};
	
	\draw[dotted] (I11b) -- (I3c) ;
	\draw [decorate,decoration={calligraphic brace,raise=2pt}]
	(I3a) -- (I3c) node [black,midway,yshift=0.5 cm]
	{\footnotesize $\beta$};

	\end{tikzpicture} 
	\caption{Random order bad instance. $\alpha < \frac{L}{2}$ and $\beta < \frac{L}{2}$}\label{fig:rom_inst}
\end{figure}
Figure \ref{fig:rom_inst} depicts the general structure of the main bad instance in the random order model. There is a single copy of intervals $I_{2}$ and $I_{3}$, but a very large number of identical intervals $I_{1}$. We first prove the following lemmas about two different algorithms. We refer to an algorithm as \textit{always-replace}, if the new interval is always taken whenever a conflict occurs. Respectively, a \textit{never-replace} algorithm never takes the new interval when there's a conflict.

\begin{lemma}
An \textit{always-replace} algorithm has a competitive ratio of 2 for the instance of figure \ref{fig:rom_inst}.
\end{lemma}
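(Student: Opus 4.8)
The plan is to show that on this instance the offline optimum is always $2$ while the always-replace algorithm ends with a single interval with high probability, which immediately yields the claimed ratio. First I would pin down the optimum. Since $\alpha < \frac{L}{2}$ and $\beta < \frac{L}{2}$, the intervals $I_{2}$ and $I_{3}$ are disjoint, so $\{I_{2},I_{3}\}$ is feasible and $OPT \geq 2$; conversely every copy of $I_{1}$ conflicts with both $I_{2}$ and $I_{3}$ and all copies of $I_{1}$ mutually conflict, so no feasible set has size $3$, giving $OPT = 2$.

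Next I would analyze the trajectory of the algorithm's solution as the intervals arrive. Using the standing assumption of this section that a newly arrived interval is discarded whenever it conflicts with more than one interval currently held, I would argue that at every step the solution is either a single interval or the pair $\{I_{2},I_{3}\}$. The key structural claim is that the algorithm terminates with two intervals if and only if $I_{2}$ and $I_{3}$ are adjacent in the arrival order, i.e. no copy of $I_{1}$ arrives between them. To see this, suppose w.l.o.g. that $I_{2}$ arrives before $I_{3}$. When $I_{2}$ arrives the solution is empty or a single copy of $I_{1}$, so always-replace leaves $\{I_{2}\}$; if any $I_{1}$ then arrives before $I_{3}$, it conflicts with $I_{2}$ alone and displaces it, after which $I_{2}$ is gone for good; whereas if $I_{3}$ is the next of the two to arrive with no intervening $I_{1}$, it does not conflict with $I_{2}$ and the solution becomes $\{I_{2},I_{3}\}$, which is thereafter stable because every later $I_{1}$ conflicts with two held intervals and is discarded.

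With this characterization the final step is a short probability estimate. Writing $N$ for the number of identical copies of $I_{1}$, in a uniformly random permutation of the $N+2$ intervals the probability that $I_{2}$ and $I_{3}$ are adjacent is $\frac{2}{N+2}$, which tends to $0$ as $N$ grows. Hence with probability $1 - \frac{2}{N+2}$ the algorithm holds exactly one interval at termination, so $ALG = 1$ and $CR = \frac{OPT}{ALG} = 2$ with high probability.

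I expect the main obstacle to be the careful verification of the structural claim rather than any calculation. One must enumerate every configuration the solution can reach (empty, a single $I_{1}$, a single $I_{2}$ or $I_{3}$, or the pair $\{I_{2},I_{3}\}$) and confirm that the discard-on-double-conflict rule together with always-replace really does force the adjacency condition, in particular ruling out any route by which a displaced $I_{2}$ or $I_{3}$ could re-enter the solution (it cannot, since discarded intervals are permanently lost).
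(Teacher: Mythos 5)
Your proof is correct and takes essentially the same approach as the paper: both argue that, with high probability, copies of $I_{1}$ interleave the arrivals of $I_{2}$ and $I_{3}$, so the always-replace algorithm never holds both of them and terminates with a single interval while $OPT=2$. You simply make the paper's terse w.h.p.\ argument precise, via the adjacency characterization and the $\frac{2}{N+2}$ probability bound, which is a sharpening rather than a different route.
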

\begin{proof}
Because there is a large number of $I_{1}$ intervals, when we look at the arrival sequence of intervals, $I_{1}$ will both precede and follow the arrival of $I_{2}$ and $I_{3}$ with very high probability. This will result in the algorithm ending up with a single interval $I_{1}$ in its solution, whereas the size of an optimal solution is 2.
\end{proof}

\begin{lemma}
A \textit{never-replace} algorithm has a competitive ratio of 2 for the instance of figure \ref{fig:rom_inst}.
\end{lemma}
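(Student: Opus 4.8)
The plan is to mirror the structure of the preceding \emph{always-replace} lemma, but to exploit a different feature of the random arrival order. First I would record the two facts that do not depend on the algorithm: the optimal solution is $\{I_{2},I_{3}\}$, which is feasible because $I_{2}$ and $I_{3}$ are disjoint, so $OPT=2$; and every copy of $I_{1}$ conflicts with everything else in the instance — with the other $I_{1}$ copies by identity, and with $I_{2}$ and $I_{3}$ through the overlaps $\alpha,\beta>0$ (both strictly positive since $\alpha,\beta<\frac{L}{2}$ are genuine overlaps). Since single-lengthed instances produce only partial conflicts, the entire behaviour of the algorithm on this run is then determined by \emph{which interval is accepted first}.

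The core of the argument is a short case analysis on the first arrival, conditioned on there being a large number $m$ of identical $I_{1}$ copies. The key case is that the first interval to arrive is one of the $I_{1}$ copies: a never-replace algorithm accepts it (the solution is empty, so there is no conflict), and from then on it can never accept anything, because every remaining interval conflicts with $I_{1}$ and a never-replace algorithm refuses every conflicting arrival. Hence in this case $ALG=\{I_{1}\}$ has size $1$, and the competitive ratio on this run is $OPT/ALG=2$.

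The step I expect to be the actual obstacle — and the point where never-replace genuinely differs from always-replace — is ruling out the complementary event. If instead $I_{2}$ (or $I_{3}$) were the first interval accepted, the algorithm would keep it, reject every conflicting $I_{1}$, and then accept the disjoint partner $I_{3}$ (resp. $I_{2}$) when it arrives, ending with $ALG=\{I_{2},I_{3}\}$ and ratio $1$; so this event is harmful to the lower bound and must be shown to be negligible. Here is where the multiplicity of $I_{1}$ does the work: the bad-for-the-adversary event is exactly that the first of the $m+2$ arrivals is $I_{2}$ or $I_{3}$, which has probability $\frac{2}{m+2}$. Taking $m$ arbitrarily large makes this vanish, so with probability $1-\frac{2}{m+2}\to 1$ the first arrival is an $I_{1}$ copy and $ALG=1$. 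I would close by invoking the with-high-probability convention for the competitive ratio stated in the preliminaries, concluding that the ratio equals $2$ with high probability.
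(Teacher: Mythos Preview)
Your argument is correct and follows the same approach as the paper's proof: the first arrival is an $I_{1}$ copy w.h.p., a never-replace algorithm accepts it and thereafter rejects everything, leaving $ALG=1$ against $OPT=2$. You have simply fleshed out the probability computation ($2/(m+2)\to 0$) and the benign complementary case that the paper leaves implicit in its one-line proof.
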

\begin{proof}
The first online interval will be $I_{1}$ w.h.p. It will never be replaced and the algorithm will end up with one interval in its solution, admitting a competitive ratio of 2.
\end{proof}

\subsection{Overlap at most $\frac{L}{2}$}

We first consider the behavior of an algorithm for overlaps at most half the length of the interval. If there is an overlap amount $v\leq \frac{L}{2}$, such that $F_{l}(v) = F_{r}(v)$, then the adversary can use the aforementioned instance with $\alpha = \beta = v$. The algorithm's behavior would then be either that of an \textit{always-replace} algorithm, or that of a \textit{never-replace} algorithm, incurring a competitive ratio of 2. For an algorithm to do better, it must be that $F_{l}(v) \neq F_{r}(v), \forall v\leq \frac{L}{2}$. In other words, for each overlap, the algorithm would replace in one way. Assume now that there exist two different overlap amounts, $\alpha, \beta \leq \frac{L}{2}$, that replace in different directions, namely $F_{l}(\alpha)\neq F_{l}(\beta)$ and $F_{r}(\alpha)\neq F_{r}(\beta)$. W.l.o.g. assume $F_{r}(\alpha) = 1$. In this case, the adversary can use the instance of figure \ref{fig:rom_inst}. Interval $I_{1}$ arrives first w.h.p., and intervals $I_{2}$ and $I_{3}$ are rejected whenever they arrive, resulting in 2-competitiveness. To avoid this, it must be that $F_{l}(\alpha)= F_{l}(\beta)$ and $F_{r}(\alpha)= F_{r}(\beta)$ $\forall \alpha, \beta \leq \frac{L}{2}$, meaning that for overlaps at most half the length of the interval, the algorithm replaces in one direction.

\subsection{Overlap greater than $\frac{L}{2}$}

We now consider the algorithm's behavior for overlap amounts greater than $\frac{L}{2}$, knowing that for overlap $v\leq \frac{L}{2}$, the algorithm is one-directional. W.l.o.g. we assume that $F_{l}(v) = 1, \forall v\leq \frac{L}{2}$.

\begin{lemma}
If $\exists\, \gamma > \frac{L}{2}: F_{l}(\gamma) = 0$, the adversary can force a competitive ratio of 2.
\end{lemma}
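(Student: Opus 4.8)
The plan is to reuse the bad instance of Figure~\ref{fig:rom_inst}, but with the left overlap enlarged to the value $\gamma$ witnessing $F_l(\gamma)=0$. Recall that, having fixed $F_l(v)=1$ for all $v\le \frac L2$, being one-directional on small overlaps also yields $F_r(v)=0$ for all $v\le \frac L2$. The idea is to exhibit a configuration in which the algorithm is forced to settle on a single copy of the many-times-repeated interval, while the optimum keeps two disjoint intervals.

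Concretely, I would place $I_1=[0,L)$ and take a very large number $M$ of identical copies of it; a single interval $I_2$ overlapping $I_1$ on its left by exactly $\gamma$ (so $I_2=[\gamma-L,\gamma)$); and a single interval $I_3$ overlapping $I_1$ on its right by $\beta$, where $0<\beta<L-\gamma$ (so $I_3=[L-\beta,2L-\beta)$). Since $\gamma>\frac L2$ forces $L-\gamma<\frac L2$, such a $\beta$ exists and moreover satisfies $\beta<\frac L2$. The constraint $\gamma+\beta<L$ makes $I_2$ and $I_3$ disjoint, so $\{I_2,I_3\}$ is feasible and $OPT=2$, while $\gamma<L$ keeps both conflicts genuinely partial.

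The analysis then proceeds as follows. Because there are $M$ copies of $I_1$ against single copies of $I_2$ and $I_3$, the first interval to arrive is a copy of $I_1$ with probability $M/(M+2)$, which tends to $1$; the algorithm accepts it, since the solution is empty and there is no conflict. I claim the state ``holding one copy of $I_1$'' is then absorbing: a further copy of $I_1$ is identical and leaves the solution unchanged; $I_2$ presents a conflict on the left of $I_1$ with overlap $\gamma$, which is rejected because $F_l(\gamma)=0$; and $I_3$ presents a conflict on the right of $I_1$ with overlap $\beta\le\frac L2$, which is rejected because $F_r(\beta)=0$. Hence, w.h.p., the algorithm finishes with a single interval, giving $ALG=1$ and a competitive ratio of $2$.

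The main thing to get right is the bookkeeping of conflict \emph{directions}: the witness $F_l(\gamma)=0$ only speaks about a new interval hitting $I_1$ from the left, so the repeated interval must be $I_1$ (the one being hit) rather than $I_2$, and the right-hand interval $I_3$ must use a small overlap so that the \emph{known} value $F_r(\beta)=0$ applies---no assumption about the unknown $F_r(\gamma)$ is needed. The only genuine constraint to verify is the geometric one, $\gamma+\beta<L$ with $\beta>0$ and $\gamma<L$, which is exactly what keeps $I_2,I_3$ disjoint while both still partially conflict with $I_1$; everything else is the same w.h.p.\ first-arrival argument used for the \textit{never-replace} lemma.
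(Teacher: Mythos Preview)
Your proof is correct and follows essentially the same approach as the paper: the paper uses the instance of Figure~\ref{fig:rom-bad-2} (many copies of $I_1$, with $I_2$ on the left at overlap $\gamma>\frac L2$ and $I_3$ on the right at overlap $\alpha<\frac L2$) and argues that w.h.p.\ $I_1$ arrives first and then both $I_2$ and $I_3$ are rejected via $F_l(\gamma)=0$ and $F_r(\alpha)=0$. Your write-up is just more explicit about the coordinates and the disjointness check $\gamma+\beta<L$, but the argument is the same.
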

\begin{proof}
Consider the instance of figure \ref{fig:rom-bad-2}. Because of the multiple copies of $I_{1}$, the arrival of $I_{2}$ and $I_{3}$ is preceded by $I_{1}$ intervals w.h.p. We know that $F_{r}(\alpha)=0$, and given that $F_{l}(\gamma) = 0$, we have that $I_{2}$ and $I_{3}$ will be rejected on arrival, leaving the algorithm with a single interval in its solution.
\end{proof} 
\begin{lemma}
If $\exists\,\gamma > \frac{L}{2}: F_{r}(\gamma) = 1$, the adversary can force a competitive ratio of 2.
\end{lemma}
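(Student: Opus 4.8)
The plan is to reuse the central-cluster gadget of Figure \ref{fig:rom-bad-2}: a large cluster of $m$ identical copies of an interval $I_1=[0,L)$, a single interval $I_2$ overlapping $I_1$ on the left by exactly $\gamma$, and a single interval $I_3$ overlapping $I_1$ on the right by a small amount $\alpha\le L-\gamma$. Since $\gamma>\tfrac{L}{2}$ we have $\alpha\le L-\gamma<\tfrac{L}{2}$, and this choice makes $I_2$ and $I_3$ disjoint from one another (each conflicts only with $I_1$), so $\{I_2,I_3\}$ is feasible and $\mathrm{OPT}=2$, while any two copies of $I_1$ conflict. The goal is to show that, with probability tending to $1$ as $m\to\infty$, the algorithm terminates holding a single copy of $I_1\notin\mathrm{OPT}$, which yields the ratio $2$.

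First I would record why the two flanking intervals are harmless to the algorithm from the central state. The earlier analysis forces $F_r(v)=0$ for every $v\le\tfrac{L}{2}$, so in particular $F_r(\alpha)=0$ and $I_3$ is rejected on arrival whenever the algorithm currently holds a copy of $I_1$. For $I_2$ there are two cases. If $F_l(\gamma)=0$ then $I_2$ is likewise rejected from the central state and we are done; this is exactly the situation of the preceding lemma, so we may assume $F_l(\gamma)=1$. In that case $I_2$ is accepted when it arrives against a held $I_1$, but note that $I_1$ lies to the \emph{right} of $I_2$ and overlaps it by $\gamma$; hence the next arriving copy of $I_1$ presents a right conflict of overlap $\gamma$ and, by the hypothesis $F_r(\gamma)=1$, displaces $I_2$ and restores $I_1$. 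Thus $I_2$ can be held only transiently. This is the step that uses $F_r(\gamma)=1$ essentially.

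Putting these together, w.h.p.\ a copy of $I_1$ arrives first (so the algorithm starts in, and against $I_3$ stays in, the central state), a copy of $I_1$ follows $I_2$ before any further special arrival (so an accepted $I_2$ is immediately undone), and a copy of $I_1$ arrives last. Hence w.h.p.\ the algorithm ends holding the single interval $I_1$, giving $\mathrm{CR}=\mathrm{OPT}/\mathrm{ALG}=2$; as in the other random-order lemmas the bound is understood to hold with high probability, and I would phrase the quantitative statement in the same $\epsilon_p/p$ style used for the adversarial lower bounds.

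The one point that needs real care — and the main obstacle — is excluding the event that the algorithm simultaneously holds the disjoint pair $\{I_2,I_3\}$, which would give $\mathrm{ALG}=2=\mathrm{OPT}$ and destroy the bound. Because $I_3$ is never accepted while a copy of $I_1$ is held, the only route to holding both is to accept $I_3$ during the brief window in which $I_2$ is held, i.e.\ $I_3$ must arrive immediately after $I_2$ with no intervening copy of $I_1$. With $m$ copies of $I_1$ and single copies $I_2,I_3$, this has probability $O(1/m)$, which vanishes as $m\to\infty$; once it is ruled out, the algorithm holds exactly one interval at termination. I would note that the model's convention that a new interval is discarded when it conflicts with two held intervals is not needed along the main line of play, since the algorithm holds a single interval throughout.
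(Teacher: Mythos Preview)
Your proof is correct and follows essentially the same approach as the paper: the same instance of Figure~\ref{fig:rom-bad-2} is used, the case $F_l(\gamma)=0$ is deferred to the preceding lemma, and in the remaining case $F_l(\gamma)=1$ the hypothesis $F_r(\gamma)=1$ is invoked so that a subsequent copy of $I_1$ displaces any transiently held $I_2$. Your explicit treatment of the event that $I_3$ arrives while $I_2$ is held (allowing the algorithm to reach $\{I_2,I_3\}=\mathrm{OPT}$) is a detail the paper leaves implicit under ``w.h.p.''; your $O(1/m)$ bound for it is correct and a welcome clarification.
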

\begin{proof}
Using the previous lemma, we assume that $F_{l}(\gamma) = 1$. Using the same instance of figure \ref{fig:rom-bad-2}, w.h.p. $I_{2}$ will conflict with, and replace $I_{1}$ on arrival, and will then be replaced by another arrival of $I_{1}$. Interval $I_{3}$ will again be rejected on arrival, leaving the algorithm with $I_{1}$ in its solution.
\end{proof} 
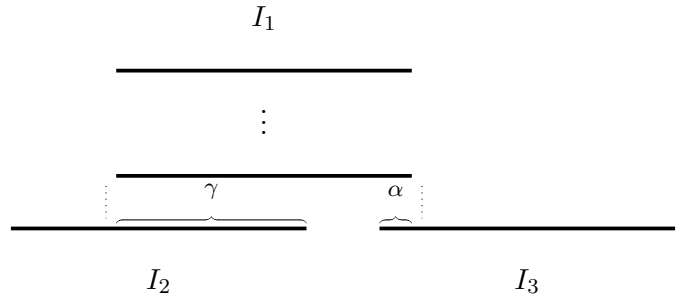
\begin{figure}[H]
	\centering
	
	\begin{tikzpicture}[scale=0.7]

	\node at (-3,1) {$I_{1}$};
	\node[draw=none] (I1a) at (-6,0) {$ $};
	\node[draw=none] (I1b) at (0,0) {$ $};
	\draw[line width=0.5mm] (I1a) -- (I1b);
	
	\node[draw=none] (d1) at (-3,-1) {$\rvdots$};
	
	\node[draw=none] (I11a) at (-6,-2) {$ $};
	\node[draw=none] (I11b) at (0,-2) {$ $};
	\draw[line width=0.5mm] (I11a) -- (I11b);

	\node[draw=none] (I2a) at (-8,-3) {$ $};
	\node[draw=none] (I2b) at (-2,-3) {$ $};
	\node at (-5,-4) {$I_{2}$};
	\draw[line width=0.5mm] (I2a) -- (I2b);

	\node[draw=none] (I3a) at (-1,-3) {$ $};
	\node[draw=none] (I3b) at (5,-3) {$ $};
	\node at (2,-4) {$I_{3}$};
	\draw[line width=0.5mm] (I3a) -- (I3b);
	
	\node[draw=none] (I2c) at (-6,-3) {$ $};
	\node[draw=none] (I3c) at (0,-3) {$ $};
	
	\draw[dotted] (I11a) -- (I2c) ;
	\draw [decorate,decoration={calligraphic brace,raise=2pt}]
	(I2c) -- (I2b) node [black,midway,yshift=0.5 cm]
	{\footnotesize $\gamma$};
	
	\draw[dotted] (I11b) -- (I3c) ;
	\draw [decorate,decoration={calligraphic brace,raise=2pt}]
	(I3a) -- (I3c) node [black,midway,yshift=0.5 cm]
	{\footnotesize $\alpha$};

	\end{tikzpicture} 
	\caption{Random order bad instance. $\alpha < \frac{L}{2}$ and $\gamma > \frac{L}{2}$}\label{fig:rom-bad-2}
\end{figure}
We now explain why an algorithm doesn't gain anything by replacing more than one interval at a time.
First, notice that in all the negative results presented in this section, there's never a conflict between more than two intervals w.h.p. Because we consider single-lengthed instances, a newly arrived interval can conflict with at most two other intervals currently in the algorithm's solution. Assume that for some overlap amounts $\alpha$ and $\beta$, a memoryless algorithm accepts a new interval conflicting with two already accepted intervals. At least one of $\{\alpha , \beta\}$ must be less than $\frac{L}{2}$ (otherwise the two current intervals would have a conflict, a contradiction). The adversary can then use the instance from figure \ref{fig:rom_inst} or figure \ref{fig:rom-bad-2}, with the appropriate $\alpha$ and $\beta$ overlaps. Regardless of the arrival order of $\{I_{2},I_{3}\}$, the algorithm ends up with a single interval in its solution.\\\\
Combining sections 5.1 and 5.2, we get the following theorem:
\begin{theorem}
Every deterministic memoryless algorithm that isn't one-directional, can be forced to a competitive ratio of at least 2 for the problem of online unweighted single-lengthed interval selection under random order arrivals.
\end{theorem}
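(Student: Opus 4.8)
The plan is to prove the contrapositive: any deterministic memoryless algorithm that \emph{cannot} be forced to competitive ratio $2$ must be one-directional. I would parametrize such an algorithm by its two decision functions $F_{l}, F_{r}$ taking values in $\{0,1\}$, and then sweep over all possible overlap values, eliminating every configuration except consistent one-directional behavior. The whole argument is a bookkeeping of which $(F_{l}, F_{r})$ profiles survive the adversarial instances of Figures \ref{fig:rom_inst} and \ref{fig:rom-bad-2}, so I would organize it in exactly the two overlap regimes treated in Sections 5.1 and 5.2 and then glue them together.

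First I would dispose of overlaps $v \leq \tfrac{L}{2}$ using the analysis of Section 5.1. Two observations rule out everything except one-directional behavior on this range: if $F_{l}(v)=F_{r}(v)$ for some $v\leq \tfrac{L}{2}$, then the symmetric instance (Figure \ref{fig:rom_inst} with $\alpha=\beta=v$) reduces the algorithm to \emph{always-replace} or \emph{never-replace}, each $2$-competitive by the two preceding lemmas; and if two small overlaps replace in opposite directions, the asymmetric instance rejects both $I_{2}$ and $I_{3}$ with high probability while keeping a single $I_{1}$, again giving ratio $2$. Hence a surviving algorithm satisfies, after relabeling left/right without loss of generality, $F_{l}(v)=1$ and $F_{r}(v)=0$ for all $v\leq \tfrac{L}{2}$.

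Next I would apply Section 5.2 under this normalization. The two lemmas there force $F_{l}(\gamma)=1$ and $F_{r}(\gamma)=0$ for every $\gamma>\tfrac{L}{2}$ as well: any violation is exploited by the instance of Figure \ref{fig:rom-bad-2}, where the adversary picks the small overlap $\alpha\leq \tfrac{L}{2}$ small enough that $I_{2}$ and $I_{3}$ stay disjoint (so $\mathrm{OPT}=2$) while the abundance of $I_{1}$ copies guarantees the algorithm ends with a lone $I_{1}$. Combining the two overlap regimes, the only profile surviving all instances is $F_{l}\equiv 1,\ F_{r}\equiv 0$ — precisely a one-directional algorithm. Finally I would dispatch the multi-conflict behavior: since the instance is single-length, a new interval accepted against two neighbors has overlaps $\alpha,\beta$ with $\alpha+\beta\leq L$ (else the two neighbors would conflict), so at least one overlap is $\leq\tfrac{L}{2}$, and feeding the matching bad instance again strands the algorithm on one interval. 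Taking the contrapositive yields the theorem.

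The hard part will be making the case analysis airtight so that no surviving profile slips through. Concretely, in each instance I must verify that $I_{2}$ and $I_{3}$ admit a disjoint placement (guaranteeing $\mathrm{OPT}=2$), that the large number of $I_{1}$ copies really delivers the claimed high-probability arrival pattern (an $I_{1}$ both preceding and, where needed, following $I_{2}$), and that the left/right normalization fixed in the $v\leq\tfrac{L}{2}$ regime is carried consistently into the $\gamma>\tfrac{L}{2}$ arguments. These convention-tracking points — rather than any deep combinatorics — are where an overlooked overlap range or a flipped direction could leave a loophole.
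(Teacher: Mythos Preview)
Your proposal is correct and follows essentially the same approach as the paper: the theorem is stated precisely as the combination of Sections 5.1 and 5.2 together with the multi-conflict paragraph, and you have reproduced that case analysis (small overlaps force $F_l\neq F_r$ and then a consistent direction; large overlaps are handled by the two lemmas under the normalization $F_l\equiv 1$ on $v\le L/2$; the two-neighbor case reduces to a small-overlap instance). There is no meaningful divergence from the paper's argument.
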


\section{Conclusions \& Open Problems}
There are a number of possible directions for future work. A very natural direction is looking at specific weighted cases. Deterministically, Garay et al. \cite{garay1997efficient} have settled the case of proportional weights with an optimal, constant-competitive algorithm. It's interesting to see if a similar constant can be achieved for the more general weight functions studied by Woeginger \cite{woeginger1994line}, with or without randomness. We considered the case of arbitrary weights in Section 4.\\\\
It is fair to say that we have a very limited understanding of randomized algorithms for interval selection.  In the unweighted adversarial setting, we have shown that no memoryless randomized algorithm can be constant-competitive and Fung et al. show that with one random bit, their 2-competitive algorithm is optimal. But we have no other negative results for unweighted or weighted interval selection when revoking is permitted. We would like to extend the memoryless model to algorithms with constant memory (beyond the current solution) as discussed further in the appendix. In particular, we would want to allow access to a few initial random bits which would also capture algorithms that fall under the \textit{classify and randomly select} paradigm. It would also be interesting to restrict the number of copies the adversary can generate, maybe only allowing a single copy of every interval, and see if memoryless randomized algorithms become more powerful.\\\\
As mentioned earlier, we can think of the parameter $k$ as a refinement of the total number of intervals, and the number of vertices of a call control instance. We find it interesting to see if restricting the number of different lengths can yield better results for the problem of call control on other classes of graphs, such as trees (see \cite{awerbuch1994competitive}).\\\\
Finally, to the best of our knowledge, we have initiated the study of this model under random order arrivals, where there are many open questions for future work. We have only looked at single-lengthed instances, a special case that, in the adversarial setting, doesn't even require revoking. Looking at multiple lengths under random arrivals, is a natural next step. Lastly, we have shown that one-directional algorithms for single-lengthed instances, are the only type of deterministic memoryless algorithms that can possibly benefit from random order arrivals. We don't have any provable upper bounds on the performance of a one-directional algorithm, but
we have conducted experiments that suggest it may achieve much better than 2-competitiveness. This is an interesting contrast with the adversarial model, where a one-directional algorithm would perform arbitrarily bad.\\\\
\textbf{Acknowledgements:} We would like to thank Denis Pankratov, Adi Ros{\'e}n and Omer Lev for many helpful comments.

\printbibliography

@article{halldorsson2010streaming,
  title={Streaming Algorithms for Independent Sets.},
  author={Halld{\'o}rsson, Bjarni V and Halld{\'o}rsson, Magn{\'u}s M and Losievskaja, Elena and Szegedy, Mario},
  journal={ICALP (1)},
  volume={6198},
  pages={641--652},
  year={2010},
  publisher={Citeseer}
}

@article{cabello2017interval,
  title={Interval selection in the streaming model},
  author={Cabello, Sergio and P{\'e}rez-Lantero, Pablo},
  journal={Theoretical Computer Science},
  volume={702},
  pages={77--96},
  year={2017},
  publisher={Elsevier}
}

@article{Kleinberg94,
  author       = {Jon M. Kleinberg},
  title        = {A Lower Bound for Two-Server Balancing Algorithms},
  journal      = {Inf. Process. Lett.},
  volume       = {52},
  number       = {1},
  pages        = {39--43},
  year         = {1994}
}

@inproceedings{BuchbinderNW23,
  author       = {Niv Buchbinder and
                  Joseph (Seffi) Naor and
                  David Wajc},
  title        = {Lossless Online Rounding for Online Bipartite Matching (Despite its
                  Impossibility)},
  booktitle    = {{SODA}},
  pages        = {2030--2068},
  publisher    = {{SIAM}},
  year         = {2023}
}

@article{bockenhauer2014online,
  title={The online knapsack problem: Advice and randomization},
  author={B{\"o}ckenhauer, Hans-Joachim and Komm, Dennis and Kr{\'a}lovi{\v{c}}, Richard and Rossmanith, Peter},
  journal={Theoretical Computer Science},
  volume={527},
  pages={61--72},
  year={2014},
  publisher={Elsevier}
}

@article{pena2019extensions,
  title={On extensions of the deterministic online model for bipartite matching and max-sat},
  author={Pena, Nicolas and Borodin, Allan},
  journal={Theoretical Computer Science},
  volume={770},
  pages={1--24},
  year={2019},
  publisher={Elsevier}
}

@article{mikkelsen2015randomization,
  title={Randomization can be as helpful as a glimpse of the future in online computation},
  author={Mikkelsen, Jesper W},
  journal={arXiv preprint arXiv:1511.05886},
  year={2015}
}

@article{bachmann2013online,
  title={Online selection of intervals and t-intervals},
  author={Bachmann, Unnar Th and Halld{\'o}rsson, Magn{\'u}s M and Shachnai, Hadas},
  journal={Information and Computation},
  volume={233},
  pages={1--11},
  year={2013},
  publisher={Elsevier}
}

@article{hyatt2020approximations,
  title={Approximations for throughput maximization},
  author={Hyatt-Denesik, Dylan and Rahgoshay, Mirmahdi and Salavatipour, Mohammad R},
  journal={arXiv preprint arXiv:2001.10037},
  year={2020}
}

@inproceedings{awerbuch1994competitive,
  title={Competitive Non-Preemptive Call Control.},
  author={Awerbuch, Baruch and Bartal, Yair and Fiat, Amos and Ros{\'e}n, Adi},
  booktitle={SODA},
  volume={94},
  pages={312--320},
  year={1994},
  organization={Citeseer}
}

@article{baptiste2000scheduling,
  title={Scheduling equal-length jobs on identical parallel machines},
  author={Baptiste, Philippe},
  journal={Discrete Applied Mathematics},
  volume={103},
  number={1-3},
  pages={21--32},
  year={2000},
  publisher={Elsevier}
}

@article{chrobak2006note,
  title={A Note on Scheduling Equal-Length Jobs to Maximize Throughput},
  author={Chrobak, Marek and D{\"u}rr, Christoph and Jawor, Wojciech and Kowalik, {\L}ukasz and Kurowski, Maciej},
  journal={Journal of Scheduling},
  volume={9},
  number={1},
  pages={71--73},
  year={2006},
  publisher={Kluwer Academic Publishers Norwell, MA, USA}
}

@article{bogart1999short,
  title={A short proof that “proper= unit”},
  author={Bogart, Kenneth P and West, Douglas B},
  journal={Discrete Mathematics},
  volume={201},
  number={1-3},
  pages={21--23},
  year={1999},
  publisher={Elsevier Science Publishers BV Amsterdam, The Netherlands, The Netherlands}
}

@article{emek2016space,
  title={Space-constrained interval selection},
  author={Emek, Yuval and Halld{\'o}rsson, Magn{\'u}s M and Ros{\'e}n, Adi},
  journal={ACM Transactions on Algorithms (TALG)},
  volume={12},
  number={4},
  pages={1--32},
  year={2016},
  publisher={ACM New York, NY, USA}
}

@inproceedings{coester2019online,
  title={The online k-taxi problem},
  author={Coester, Christian and Koutsoupias, Elias},
  booktitle={Proceedings of the 51st Annual ACM SIGACT Symposium on Theory of Computing},
  pages={1136--1147},
  year={2019}
}

@article{tomkins1995lower,
  title={Lower bounds for two call control problems},
  author={Tomkins, Andrew},
  journal={Information processing letters},
  volume={56},
  number={3},
  pages={173--178},
  year={1995},
  publisher={Elsevier}
}

@inproceedings{raghavan1989memory,
  title={Memory versus randomization in on-line algorithms},
  author={Raghavan, Prabhakar and Snir, Marc},
  booktitle={Automata, Languages and Programming: 16th International Colloquium Stresa, Italy, July 11--15, 1989 Proceedings 16},
  pages={687--703},
  year={1989},
  organization={Springer}
}

@article{koutsoupias2009k,
  title={The k-server problem},
  author={Koutsoupias, Elias},
  journal={Computer Science Review},
  volume={3},
  number={2},
  pages={105--118},
  year={2009},
  publisher={Elsevier}
}

@article{sgall1998line,
  title={On-line scheduling},
  author={Sgall, Ji{\v{r}}\'{i}},
  journal={Online algorithms},
  pages={196--231},
  year={1998},
  publisher={Springer}
}

@article{borodin2012sum,
  title={On sum coloring and sum multi-coloring for restricted families of graphs},
  author={Borodin, Allan and Ivan, Ioana and Ye, Yuli and Zimny, Bryce},
  journal={Theoretical Computer Science},
  volume={418},
  pages={1--13},
  year={2012},
  publisher={Elsevier}
}

@article{miyazawa2004improved,
  title={An improved randomized on-line algorithm for a weighted interval selection problem},
  author={Miyazawa, Hiroyuki and Erlebach, Thomas},
  journal={Journal of Scheduling},
  volume={7},
  number={4},
  pages={293--311},
  year={2004},
  publisher={Springer}
}

@article{fung2014improved,
  title={Improved randomized online scheduling of intervals and jobs},
  author={Fung, Stanley PY and Poon, Chung Keung and Zheng, Feifeng},
  journal={Theory of Computing Systems},
  volume={55},
  number={1},
  pages={202--228},
  year={2014},
  publisher={Springer}
}

@article{yu2018online,
  title={Online C-benevolent job scheduling on multiple machines},
  author={Yu, Ge and Jacobson, Sheldon H},
  journal={Optimization Letters},
  volume={12},
  number={2},
  pages={251--263},
  year={2018},
  publisher={Springer}
}

@article{fung2012line,
  title={On-line scheduling of equal-length intervals on parallel machines},
  author={Fung, Stanley PY and Poon, Chung Keung and Yung, Duncan KW},
  journal={Information Processing Letters},
  volume={112},
  number={10},
  pages={376--379},
  year={2012},
  publisher={Elsevier}
}

@inproceedings{canetti1995bounding,
  title={Bounding the power of preemption in randomized scheduling},
  author={Canetti, Ran and Irani, Sandy},
  booktitle={Proceedings of the twenty-seventh annual ACM symposium on Theory of computing},
  pages={606--615},
  year={1995}
}

@article{seiden1998randomized,
  title={Randomized online interval scheduling},
  author={Seiden, Steven S},
  journal={Operations Research Letters},
  volume={22},
  number={4-5},
  pages={171--177},
  year={1998},
  publisher={Elsevier}
}

@article{halldorsson2013online,
  title={Online scheduling with interval conflicts},
  author={Halld{\'o}rsson, Magn{\'u}s M and Patt-Shamir, Boaz and Rawitz, Dror},
  journal={Theory of Computing Systems},
  volume={53},
  number={2},
  pages={300--317},
  year={2013},
  publisher={Springer}
}

@inproceedings{epstein2008improved,
  title={Improved randomized results for that interval selection problem},
  author={Epstein, Leah and Levin, Asaf},
  booktitle={European Symposium on Algorithms},
  pages={381--392},
  year={2008},
  organization={Springer}
}

@article{garay1997efficient,
  title={Efficient on-line call control algorithms},
  author={Garay, Juan A and Gopal, Inder S and Kutten, Shay and Mansour, Yishay and Yung, Moti},
  journal={Journal of Algorithms},
  volume={23},
  number={1},
  pages={180--194},
  year={1997},
  publisher={Elsevier}
}

@article{kovalyov2007fixed,
  title={Fixed interval scheduling: Models, applications, computational complexity and algorithms},
  author={Kovalyov, Mikhail Y and Ng, Chi To and Cheng, TC Edwin},
  journal={European journal of operational research},
  volume={178},
  number={2},
  pages={331--342},
  year={2007},
  publisher={Elsevier}
}

@inproceedings{garay1992call,
  title={Call preemption in communication networks},
  author={Garay, Juan A and Gopal, Inder S},
  booktitle={[Proceedings] IEEE INFOCOM'92: The Conference on Computer Communications},
  pages={1043--1050},
  year={1992},
  organization={IEEE}
}

@article{adler2003beating,
  title={Beating the logarithmic lower bound: randomized preemptive disjoint paths and call control algorithms},
  author={Adler, Ran and Azar, Yossi},
  journal={Journal of Scheduling},
  volume={6},
  number={2},
  pages={113--129},
  year={2003},
  publisher={Springer}
}

@inproceedings{lipton1994online,
  title={Online Interval Scheduling.},
  author={Lipton, Richard J and Tomkins, Andrew},
  booktitle={SODA},
  volume={94},
  pages={302--311},
  year={1994}
}

@inproceedings{blum2001admission,
  title={Admission control to minimize rejections},
  author={Blum, Avrim and Kalai, Adam and Kleinberg, Jon},
  booktitle={Workshop on Algorithms and Data Structures},
  pages={155--164},
  year={2001},
  organization={Springer}
}

@inproceedings{alon2005admission,
  title={Admission control to minimize rejections and online set cover with repetitions},
  author={Alon, Noga and Azar, Yossi and Gutner, Shai},
  booktitle={Proceedings of the seventeenth annual ACM symposium on Parallelism in algorithms and architectures},
  pages={238--244},
  year={2005}
}

@article{woeginger1994line,
  title={On-line scheduling of jobs with fixed start and end times},
  author={Woeginger, Gerhard J},
  journal={Theoretical Computer Science},
  volume={130},
  number={1},
  pages={5--16},
  year={1994},
  publisher={Elsevier}
}

@article{kolen2007interval,
  title={Interval scheduling: A survey},
  author={Kolen, Antoon WJ and Lenstra, Jan Karel and Papadimitriou, Christos H and Spieksma, Frits CR},
  journal={Naval Research Logistics (NRL)},
  volume={54},
  number={5},
  pages={530--543},
  year={2007},
  publisher={Wiley Online Library}
}

@article{faigle1995note,
  title={Note on scheduling intervals on-line},
  author={Faigle, Ulrich and Nawijn, Willem M},
  journal={Discrete Applied Mathematics},
  volume={58},
  number={1},
  pages={13--17},
  year={1995},
  publisher={Elsevier}
}

@article{BockenhauerKKK2017,
title = "On the advice complexity of the k-server problem ",
journal = "J. of Comput. and System Sciences ",
volume = "86",
number = "",
pages = "159 - 170",
year = "2017",
note = "",
author = "Hans-Joachim B{\"o}ckenhauer and Dennis Komm and Rastislav Kr\'alovi\v{c} and Richard Kr\'alovi\v{c}",
}

@inproceedings{DurrKR2016,

  author    = {Christoph D{\"{u}}rr and

               Christian Konrad and

               Marc P. Renault},

  title     = {On the Power of Advice and Randomization for Online Bipartite Matching},

  booktitle = {24th Annual European Symposium on Algorithms, {ESA} 2016, August 22-24,

               2016, Aarhus, Denmark},

  pages     = {37:1--37:16},

  year      = {2016},

}
\appendix
\section{Memory in online computation}\label{app:A}
The impact of limited memory (or time) is usually not considered in online competitive analysis, since the analysis is information theoretic and independent of complexity issues. Of course, the assumption is that algorithms are usually efficient (in terms of time and space) while negative results are that much stronger as they do not require any complexity assumptions. 

However, the arguments for limited memory in streaming algorithms apply equally well to online algorithms which are forced to immediately make decisions for each input as it occurs.  There has been some limited results concerning memory with respect to competitiveness. Perhaps the first study of memoryless algorithms occurs in Kleinberg's \cite{Kleinberg94} study of balancing algorithms for 2-server algorithms where it is shown that the optimal competitive ratio cannot be achieved. The first issue is  to define memoryless and  bounded  memory algorithms in online computation? In the earlier conference version of Adler and Azar \cite{adler2003beating}, they ask ``is there a memoryless online algorithm for interval selection that achieves a constant competitive ratio''. In the journal version they reframe this question and ask ``is there a bounded memory algorithm achieving a constant competitive ratio''. They do no provide a definite meaning for the term {\it bounded memory}.  
Emek et al. \cite{emek2016space} provide an interesting streaming based  online algorithm (with revoking) that is ``barely random'' and achieves an improved  constant competitive ratio. They seem to implicitly  argue that their algorithm is ``bounded memory''  in the sense that the additional memory (beyond the current solution) is linear in the size of the optimal solution.   
Here we are counting memory in terms of the number of intervals and not necessarily in terms of bits of memory. This is a ``permissive'' definition  of memoryless that could nicely serve in defining ``semi-streaming'' that goes beyond graph optimization problems.
In this ``semi-streaming'' model, Cabello and Pérez-Lantero \cite{cabello2017interval} give alternative algorithms that match the performance of Emek et al. \cite{emek2016space} for interval selection  and same-length interval selection. In addition, for interval selection (on $n$ equally spaced points) they show how to ($2+\epsilon$)-approximate the optimal solution \textit{size} using $O(\epsilon^{-5}log^{6}n)$ space, and show that no better approximation can be achieved using $o(n)$ space. This lower bound on memory also applies to algorithms for computing a solution in the model proposed by Halld{\'o}rsson et al. \cite{halldorsson2010streaming}.\\\\
In section 3, we define memoryless in a strict sense, namely that the algorithm does not maintain any information except the current solution. This is the definition  of memoryless as used in Raghavan and Snir \cite{raghavan1989memory}, Koutsoupias \cite{koutsoupias2009k}, and Coester and Koutsoupias \cite{coester2019online}. The strict definition is also sufficient for the simple $\frac{1}{4}$ competitive,  1 random bit randomized algorithm (without revoking) for the proportional knapsack\footnote{In the proportional knapsack the profit of an item is equal to its size. We assume every item has $size > 0$ or that the representation of the $i^{th}$ item includes the index $i$.}.
But as stated the strict definition does not include barely random algorithms even for those which do not use any memory beyond remembering a few initial bits. However, we would argue that remembering any initial random bits is a form of memory. \\\\
This leads us to what is arguably the most interesting interpretation of the Adler and Azar question; namely, is there a constant competitive (perhaps barely random) randomized algorithm that does not store any information besides the current solution and some number of initial bits. Of course we would allow such algorithms to use fresh bits (as well as the current solution) in randomly deciding the decision for the current input item.   The Emek et al. algorithm uses memory well beyond the 2 random bits. In contrast, the Fung et al. \cite{fung2014improved} algorithm for single length, arbitrary weights, does not use any additional memory beyond the one initial random bit. Indeed, 
many classify and randomly select algorithms only remember the initial random bits needed to classify an input item. We can also ask more generally when an algorithm only maintains a constant number of bits (and not necessarily initial random bits) in both deterministic and randomized algorithms. B{\"{o}}ckenhauer et al. \cite{bockenhauer2014online} show  that the proportional and general knapsack problems exhibit ``phase transitions''  as to how much advice and random bits are needed to achieve certain certain competitive ratios. The results of \cite{pena2019extensions,BuchbinderNW23,BockenhauerKKK2017,DurrKR2016} provide interesting phase transitions for randomized algorithms and deterministic algorithms with advice for the online unweighted  bipartite matching problem. \\\\
Finally we mention that the results of Mikklesen \cite{mikkelsen2015randomization} (for repeatable problems) and B{\"{o}}ckenhauer \cite{BockenhauerKKK2017} provide interesting results about the relation between randomized algorithms and advice. The bipartite matching problem and the interval selection problem are repeatable problems. It is interesting to explore this relation further with regard to the interval selection problem. Namely, does interval selection have a phase transition in that 1 bit of randomness is sufficient for a barely random $2$ competitive ratio whereas no additional random or advice bits can help, or is there perhaps some threshold at $\Theta({\log n})$ where that amount of  advice (randomness) can asymptotically  beat the $2$ ratio for interval selection. 
It is interesting to observe the difference between advice  bits and random bits for the proportional and general knapsack problems as proven in B{\"{o}}ckenhauer et al. \cite{bockenhauer2014online}. Is there a provable difference for interval selection between advice bits and random bits?

\end{document}